\documentclass[aps,pra,reprint,superscriptaddress,nofootinbib,nobibnotes,floatfix]{revtex4-2}
\usepackage[T1]{fontenc}
\usepackage{lmodern,microtype}
\usepackage{amsmath,amssymb,amsfonts,mathtools,bm}
\usepackage{amsthm}
\usepackage{booktabs,array,enumitem,graphicx,xcolor}
\usepackage{hyperref}
\usepackage[nameinlink,capitalize]{cleveref}
\hypersetup{colorlinks=true,linkcolor=blue!45!black,citecolor=blue!45!black,urlcolor=blue!45!black,
 pdftitle={Optimal temporal hiding in correlated quantum reference-frame processes},
 pdfauthor={Maxim V. Churilov}}

\newtheorem{theorem}{Theorem}
\newtheorem{lemma}[theorem]{Lemma}
\newtheorem{proposition}[theorem]{Proposition}
\newtheorem{corollary}[theorem]{Corollary}

\newcommand{\Tr}{\operatorname{Tr}}

\newcommand{\Prob}{\mathsf P}
\newcommand{\TV}{d_{\mathrm{TV}}}
\newcommand{\F}{\mathbb F}
\newcommand{\wt}{\operatorname{wt}}
\newcommand{\supp}{\operatorname{supp}}
\newcommand{\drel}{d_{\rm rel}}

\makeatletter
\AtBeginDocument{%
  \@ifpackageloaded{hyperref}{\hypersetup{hidelinks}}{}%
}
\makeatother

\begin{document}
\title{Optimal Temporal Hiding in Correlated Quantum Reference-Frame Processes}
\author{Maxim V. Churilov}
\email{churilovm1305@gmail.com}
\affiliation{Independent Researcher, Orenburg, Russia}
\date{April 27, 2026}

\begin{abstract}
A finite group-valued temporal reference frame defines a correlated random-unitary process rather than a list of independent channels.  We study the fresh-probe architecture in which each time slot has its own carrier input while a classical memory correlates the group labels across slots.  In this setting we identify a regime in which the complete trajectory law is an exact operational coordinate.  If the carrier contains every irreducible representation of the finite group $G$, one ancilla-assisted input resolves all group elements, and for arbitrary laws $\mu,\nu$ on $G^n$ the strategy half-distance equals $d_{\mathrm{TV}}(\mu,\nu)$.  Optimal output-side causal post-processing likewise reduces to classical convolution on $G^n$.  This converts temporal hiding into a $q$-ary coding problem.  Uniform coset laws of an $[n,k]_q$ code are identical on every set of fewer than $d(C^\perp)$ slots and perfectly distinguishable globally.  For $M$ sectors with leakage at most $\eta$ from $t$ selected slots and global decoding error $\epsilon$, we obtain
\[
 (1-\epsilon)\log M\le (n-t)\log q+\eta+h_2(\epsilon).
\]
For nested codes $C\subset D$, the payload $R=\dim D-\dim C$, invisible depth $t=d_{\rm rel}(C^\perp,D^\perp)-1$, and sector distance $d_{\rm rel}(D,C)$ obey
\[
 R+t+d_{\rm rel}(D,C)-1\le n,
 \qquad R+t+2e+f\le n,
\]
where $e$ and $f$ are adversarial errors and known erasures.  Nested generalized Reed--Solomon codes attain the bounds in their existence range.  We also give the exact projection-rank leakage profile and show that hiding arbitrary coherent superpositions from $t$ slots is precisely quantum erasure correction, yielding $\log_q K+2t\le n$.  The results separate classical trajectory privacy from coherent temporal privacy and quantify the robust payload hidden from reduced process tomography.
For an arbitrary \(n\)-history alphabet, the exact coherent distance to
the entire classical simplex is \(1-1/n\).  Under \(m\) repeated uses,
the exact distance from every correlated classical-history model is
\(1-n^{-m}\), giving a fully adaptive discrimination strong converse
with error \(1/(2n^m)\).
\end{abstract}

\maketitle

\section{Introduction}

A reference token used at several times carries a history, not merely a sequence of independent noise parameters.  The hidden orientation, phase, or clock offset may persist and correlate different interrogations.  Quantum combs and process tensors provide the appropriate multi-time operational formalism \cite{GutoskiWatrous2007,Chiribella2009,Pollock2018,MilzModi2021,Keeling2026}, while generalized process divergences clarify the required data-processing properties \cite{Zambon2024}.  Quantum reference frames supply the relational setting \cite{Bartlett2007,Hausmann2025,Carette2025}.  Here we study a deliberately restricted but exact model: a classical finite-group frame history drives a correlated sequence of unitary branches, one on each fresh process input.  There is no quantum system wire carrying the same probe from one slot to the next; the inter-slot memory is the sampled classical history.  The central identification problem is how much of that process can remain invisible to experiments with limited temporal support.

The statement that ``correlations matter'' is too weak.  A useful theory should determine the full operational distance, identify the most general allowed degradation, construct processes hidden up to a prescribed depth, and prove whether the construction is optimal.  We solve these questions for finite group-valued frame histories whose branches can be resolved by one ancilla-assisted input.

Let $\bm g=(g_1,\ldots,g_n)\in G^n$ be sampled from an arbitrary law $\mu$.  At time $j$, the token applies $U_{g_j}$.  No Markov, stationarity, or independence assumption is made.  If the representation contains every irreducible type, one calibration state has an orthonormal group orbit.  Feeding one such state at every time makes complete trajectories orthogonal.  The resulting strategy norm is exactly total variation, and arbitrary output-side causal quantum processing cannot beat a correlated classical correction history.

The second part turns this exact coordinate into a coding problem.
Uniform laws on distinct cosets of a code have disjoint global supports
but common low-coordinate marginals controlled by the dual distance.
Their corresponding QRF processes are therefore locally identical and
globally orthogonal.  An information-theoretic converse remains valid
with imperfect global decoding and nonzero local leakage.  Without
assuming linearity or uniformity, a puncturing argument gives the
privacy--distance Singleton bound
$\log_qM+t+d_Z-1\le n$.  Robust readout needs one further constructive
idea.  All cosets of one code are separated by Hamming distance one,
so their syndrome labels cannot correct even a single adversarial
branch-readout error.  We repair this obstruction with nested pairs
$C\subset D$: cosets of $C$ inside $D$ retain the hiding depth of $C$
and acquire the relative distance of the pair.  Nested generalized
Reed--Solomon codes attain the general bound.

\subsection{Relation to prior work and claim boundary}

Quantum strategies and process tensors provide the ambient formalism \cite{GutoskiWatrous2007,Chiribella2009,Pollock2018,Keeling2026}.  Generalized process divergences and their data processing were analyzed in \cite{Zambon2024}.  Operational transformations between quantum reference-frame descriptions were developed in \cite{Carette2025}; our object is instead a restricted stochastic multi-slot process with a fixed external slot structure.  The equivalence between code dual distance and orthogonal-array strength is classical \cite{Delsarte1973,MacWilliamsSloane1977,HuffmanPless2003}.

We do not claim that $t$-wise independent code distributions,
orthogonal arrays, wiretap-II privacy, or relative distances of nested
codes are new; these are standard in coding and ramp secret sharing
\cite{Delsarte1973,MacWilliamsSloane1977,HuffmanPless2003,OzarowWyner1984,Galindo2018}.
The contribution is their operational realization as finite-history QRF
processes, the strategy-norm and causal-degradation isometries for
orbit-resolving carriers, the finite-error capacity converse, and a
self-contained privacy--distance converse for simultaneous temporal
hiding and robust sector readout.  The projection-rank leakage hierarchy
and the coherent Singleton boundary are standard coding-theoretic
structures \cite{Kurihara2012,Galindo2018,KnillLaflamme2000,Grassl2022};
we use them to identify the exact operational boundary between classical
trajectory privacy and coherent temporal privacy.  The model is
process-separable and does not cover arbitrary coherent superpositions
of histories or indefinite causal order.

\paragraph{Status of the principal ingredients.}
Comb discrimination supplies the operational norm, while dual
distance, orthogonal arrays, and nested-code relative weights supply
the classical construction.  The central package is the exact QRF
process translation, the approximate capacity theorem, the nonlinear
privacy--distance converse, and the corrected
payload--depth--protection constraint.

\section{Orbit-resolving temporal frame processes}

Let $G$ be finite and $U:G\to\mathcal U(\mathcal H)$ a unitary representation.  We call the carrier \emph{orbit resolving} if there are an ancilla $R$ and a unit vector $|\Psi\rangle\in\mathcal H\otimes R$ such that
\begin{equation}
 |\Psi_g\rangle=(U_g\otimes I_R)|\Psi\rangle,
 \qquad g\in G,
 \label{eq:orbit}
\end{equation}
form an orthonormal family.

\begin{lemma}[Orbit-resolution criterion]
\label{lem:orbit-criterion}
A finite-group representation $U$ is orbit resolving if and only if it
contains every irreducible representation of $G$ at least once.  In that
case the seed may be chosen with a reduced carrier state $\rho$ satisfying
$[\rho,U_g]=0$ and $\Tr(\rho U_g)=\delta_{g,e}$.
\end{lemma}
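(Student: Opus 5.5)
The plan is to reduce orthonormality of the orbit to a statement about a single operator on the carrier, and then analyze that operator with the character theory of $G$. For any seed $|\Psi\rangle\in\mathcal H\otimes R$, let $\rho=\Tr_R|\Psi\rangle\langle\Psi|$ be the reduced carrier state. Then
\[
\langle\Psi_g|\Psi_h\rangle=\langle\Psi|(U_{g^{-1}h}\otimes I_R)|\Psi\rangle=\Tr(\rho\,U_{g^{-1}h}).
\]
Hence the family is orthonormal if and only if $\Tr(\rho U_k)=\delta_{k,e}$ for all $k\in G$. Conversely, every density operator $\rho$ on $\mathcal H$ has a purification once the ancilla is large enough. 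So orbit resolution is equivalent to the existence of a state $\rho$ with $\Tr(\rho U_k)=\delta_{k,e}$.

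\emph{Necessity.} Decompose $\mathcal H=\bigoplus_\lambda V_\lambda\otimes\mathbb C^{m_\lambda}$ into isotypic components and define
\[
P_\lambda=\frac{d_\lambda}{|G|}\sum_k\overline{\chi_\lambda(k)}\,U_k,
\]
the isotypic projector. Suppose $\Tr(\rho U_k)=\delta_{k,e}$. Then $\Tr(\rho P_\lambda)=d_\lambda\overline{\chi_\lambda(e)}/|G|=d_\lambda^2/|G|>0$ for every irreducible $\lambda$. This forces $P_\lambda\neq0$, so every irrep occurs. An equivalent route: the function $k\mapsto\Tr(\rho U_k)$ then equals $\delta_e$, which has nonzero Fourier coefficient at every irrep. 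This contradicts $\widehat{\phi}(\lambda)=0$ for any $\lambda$ absent from $U$.

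\emph{Sufficiency.} Suppose every irrep occurs. Choose one copy $V_\lambda\subset\mathcal H$ for each $\lambda$ and set
\[
\rho=\sum_\lambda\frac{d_\lambda}{|G|}\,\Pi_{V_\lambda},
\]
where $\Pi_{V_\lambda}$ is the projector onto the chosen copy. Its trace is $\sum_\lambda d_\lambda^2/|G|=1$, and $\Tr(\rho U_k)=\sum_\lambda d_\lambda\chi_\lambda(k)/|G|=\delta_{k,e}$ by the column orthogonality of the regular character. The commutation $[\rho,U_g]=0$ holds because each chosen copy is an invariant subspace, so its projector commutes with $U_g$. Purifying $\rho$ then gives the seed. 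Concretely, take $R$ to be $\bigoplus_\lambda V_\lambda^*$ and $|\Psi\rangle=\sum_\lambda\sqrt{d_\lambda/|G|}\,|\Phi^+_\lambda\rangle$, a sum of unnormalized maximally entangled vectors on $V_\lambda\otimes V_\lambda^*$; this realizes the regular representation inside $\mathcal H\otimes R$.

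\emph{Main obstacle.} The only delicate step is necessity, where one must show that no cleverly chosen non-invariant $\rho$ can compensate for a missing irrep. The Fourier/projector argument above handles this without twirling, because $\Tr(\rho P_\lambda)$ is evaluated directly from the assumed traces. I would also remark that any valid $\rho$ can be twirled to an invariant one with the same traces, which makes the commutation property automatic.
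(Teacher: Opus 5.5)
Your proof is correct and follows the paper's argument. The sufficiency construction $\rho=\sum_\lambda (d_\lambda/|G|)\Pi_{V_\lambda}$, together with the regular-character identity, is exactly the paper's block-diagonal seed; your isotypic-projector computation $\Tr(\rho P_\lambda)=d_\lambda^2/|G|>0$ is the character-level form of the paper's observation that the Fourier transform of $\delta_{g,e}$ is nonzero in every irreducible block while the coefficients of $U$ vanish on absent blocks (you also spell out the reduction $\langle\Psi_g|\Psi_h\rangle=\Tr(\rho U_{g^{-1}h})$, the commutation $[\rho,U_g]=0$, and an explicit purification, which the paper leaves implicit).
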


\begin{proof}
Write $U\simeq\bigoplus_\lambda(U^\lambda\otimes I_{m_\lambda})$.  If every
$m_\lambda$ is nonzero, choose one multiplicity vector in each block and set
\[
 \rho=\bigoplus_\lambda \frac{d_\lambda}{|G|}
 I_{d_\lambda}\otimes |0_\lambda\rangle\!\langle0_\lambda|.
\]
By the regular-character identity,
\begin{equation}
 \Tr(\rho U_g)=\frac{1}{|G|}\sum_\lambda
 d_\lambda\chi_\lambda(g)=\delta_{g,e}.
\end{equation}
A purification of $\rho$ therefore has the orthonormal orbit
\eqref{eq:orbit}.  Conversely, orthogonality makes the coefficient function
$g\mapsto\Tr(\rho U_g)$ equal to $\delta_{g,e}$.  The Fourier transform of
$\delta_{g,e}$ is nonzero in every irreducible block, while matrix
coefficients of $U$ vanish in blocks absent from $U$.  Hence no irreducible
type can be missing.
\end{proof}

The regular representation is one example, but multiplicity one for each
irreducible type already suffices after adjoining an ancilla.

At $n$ ordered times, a history law $\mu\in\Prob(G^n)$ defines the correlated unitary process below.  Each slot accepts a fresh carrier input and returns its corresponding output; the tensor-product notation is therefore literal at the level of slot interfaces.  The same sampled history $\bm g$ is stored classically and correlates the branches.  A model in which one quantum probe propagates sequentially through all $U_{g_j}$ is different and is not covered by the trajectory-isometry theorem.
\begin{equation}
 \mathfrak M_\mu^{(n)}
 =\sum_{\bm g\in G^n}\mu(\bm g)
 \bigotimes_{j=1}^n\operatorname{Ad}_{U_{g_j}}.
 \label{eq:history-process}
\end{equation}
The notation is the flattened channel of an $n$-slot comb.  Operationally, a classical environment samples $\bm g$ once, stores it, and applies the indicated branch at each time.  A tester may prepare entangled inputs in advance, retain quantum memory, adapt operations between slots, and measure at the end.  Let $\|\cdot\|_{\rm strat}$ denote the resulting strategy norm, normalized so that perfectly distinguishable processes have half-distance one.

\begin{theorem}[Trajectory isometry]
\label{thm:isometry}
For an orbit-resolving carrier and arbitrary history laws $\mu,\nu$,
\begin{equation}
 \boxed{
 \frac12\|\mathfrak M_\mu^{(n)}-
 \mathfrak M_\nu^{(n)}\|_{\rm strat}
 =\TV(\mu,\nu)}.
 \label{eq:isometry}
\end{equation}
A parallel, nonadaptive tester is optimal.
\end{theorem}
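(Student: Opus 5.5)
The proof splits into two matching bounds: a lower bound from an explicit parallel tester, and an upper bound from a classical simulation of every tester.

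\emph{Lower bound.} By Lemma~\ref{lem:orbit-criterion} we fix a seed $|\Psi\rangle\in\mathcal H\otimes R$ whose orbit $\{|\Psi_g\rangle\}_{g\in G}$ is orthonormal. The tester feeds one copy of $|\Psi\rangle$ into each slot in parallel, keeping the $n$ ancillas. The joint output is then
\[
\sum_{\bm g}\mu(\bm g)\,|\Psi_{g_1}\rangle\!\langle\Psi_{g_1}|\otimes\cdots\otimes|\Psi_{g_n}\rangle\!\langle\Psi_{g_n}|,
\]
which is a classical--quantum encoding of $\mu$ into the orthonormal basis $\{|\Psi_{\bm g}\rangle\}$ of its span. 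The same holds for $\nu$ in the same basis. Both output states are diagonal in one basis, so their trace distance is exactly $\TV(\mu,\nu)$, and measuring in that basis achieves it. Hence $\tfrac12\|\mathfrak M_\mu^{(n)}-\mathfrak M_\nu^{(n)}\|_{\rm strat}\ge\TV(\mu,\nu)$, and the bound is attained by a nonadaptive tester.

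\emph{Upper bound.} We take an arbitrary adaptive tester: an initial state, interleaved quantum memory channels, and a final two-outcome POVM $\{P,I-P\}$. For a fixed history $\bm g$, the process is the deterministic sequence of unitaries $\operatorname{Ad}_{U_{g_j}}$. The tester therefore produces a final state $\sigma_{\bm g}$ that depends only on $\bm g$, not on $\mu$. The process is a convex mixture over histories, and the linked product with a tester is linear in the comb. So the final state under $\mathfrak M_\mu^{(n)}$ is $\sum_{\bm g}\mu(\bm g)\sigma_{\bm g}$. The acceptance probability is $p_\mu=\sum_{\bm g}\mu(\bm g)f(\bm g)$ with $f(\bm g)=\Tr(P\sigma_{\bm g})\in[0,1]$. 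Therefore
\[
|p_\mu-p_\nu|=\Bigl|\sum_{\bm g}(\mu-\nu)(\bm g)f(\bm g)\Bigr|\le\TV(\mu,\nu),
\]
by the standard variational characterization of total variation. Taking the supremum over testers, with the normalization stated in the paper (half strategy norm equals optimal bias), gives equality.

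The main obstacle is justifying the linearity step rigorously inside the strategy-norm formalism. The flattened tensor-product notation in \eqref{eq:history-process} must be read as an $n$-slot comb, namely a convex combination of product-unitary combs, each of which is a valid causal strategy. The strategy norm is defined as a supremum over co-strategies of $\|\langle\mathfrak M_\mu-\mathfrak M_\nu,\text{tester}\rangle\|_1$. By linearity this pairing equals $\|\sum_{\bm g}(\mu-\nu)(\bm g)\sigma_{\bm g}\|_1$, and the triangle inequality gives at most $\sum_{\bm g}|\mu-\nu|(\bm g)=2\TV(\mu,\nu)$. I would present the upper bound in this trace-norm form rather than the POVM form, to match the normalization directly.

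Orbit resolution is used only in the lower bound; the upper bound holds for any representation.
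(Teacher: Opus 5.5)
Your proposal is correct and follows the paper's proof essentially step for step. The parallel orthonormal-orbit seed gives the lower bound through diagonal output states, and linearity of the tester in the comb (the paper's Appendix~\ref{app:comb}) together with the triangle inequality on $\sum_{\bm g}(\mu-\nu)(\bm g)\rho_{\bm g}$ gives the matching upper bound, with orbit resolution entering only the lower bound, as you note.
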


\begin{proof}
Feed one copy of the orbit seed $|\Psi\rangle$ at each time and retain all ancillas.  The branch output
\begin{equation}
 |\Psi_{\bm g}\rangle=\bigotimes_{j=1}^n|\Psi_{g_j}\rangle
\end{equation}
is orthonormal in $\bm g$.  The output difference is diagonal in this basis and has trace norm $\|\mu-\nu\|_1$, giving the lower bound.  For any normalized tester $T$,
\begin{equation}
 T[\mathfrak M_\mu-\mathfrak M_\nu]
 =\sum_{\bm g}(\mu-\nu)(\bm g)\rho_{\bm g}
\end{equation}
for density operators $\rho_{\bm g}$.  The triangle inequality gives the matching upper bound.
\end{proof}

Thus the trajectory law is an operationally complete coordinate of this process family; adaptivity cannot reveal more distance than the orthogonal-history probe already extracts.

\section{Exact output-side causal degradation}

An \emph{output-side causal converter} is a sequence of channels that acts after each process slot on the current output and a retained converter memory, without modifying the input interface of later slots.  It has no access to the tester's retained reference systems.  Such converters include correlated, time-inhomogeneous, and quantum-memory strategies.  Let $\mathsf{OutCausal}_n$ denote this class and define
\begin{equation}
 \delta_{\rm out}(\nu|\mu)=
 \inf_{\Lambda\in\mathsf{OutCausal}_n}
 \frac12\|\mathfrak M_\nu^{(n)}-
 \Lambda\circ\mathfrak M_\mu^{(n)}\|_{\rm strat}.
 \label{eq:out-def}
\end{equation}
On $G^n$, multiplication and inversion are componentwise, and
\begin{equation}
 (r*\mu)(\bm h)=\sum_{\bm g}r(\bm h\bm g^{-1})\mu(\bm g).
 \label{eq:history-conv}
\end{equation}

\begin{theorem}[Causal classicalization]
\label{thm:causal}
For every orbit-resolving carrier,
\begin{equation}
 \boxed{
 \delta_{\rm out}(\nu|\mu)
 =\min_{r\in\Prob(G^n)}\TV(\nu,r*\mu)}.
 \label{eq:causal}
\end{equation}
An optimizer is implemented by sampling the full correction history at the initial time and applying $U_{r_j}$ after slot $j$.
\end{theorem}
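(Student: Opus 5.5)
The proof has two inequalities. The upper bound comes from an explicit converter. The lower bound comes from feeding the orbit seed of Lemma~\ref{lem:orbit-criterion} into every slot. This turns any output-side causal converter into a classical stochastic map on histories, which must then be shown to be a convolution.

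\emph{Achievability.} Fix $r\in\Prob(G^n)$. At the initial time the converter samples $\bm r\sim r$ and stores it in classical memory. After slot $j$ it applies $\operatorname{Ad}_{U_{r_j}}$ to the current output. This is a legitimate element of $\mathsf{OutCausal}_n$, and it does not touch later inputs. Since $U_{r_j}U_{g_j}=U_{r_jg_j}$, composing gives
\[
\Lambda\circ\mathfrak M_\mu^{(n)}=\sum_{\bm g,\bm r}r(\bm r)\mu(\bm g)\bigotimes_j\operatorname{Ad}_{U_{r_jg_j}}=\mathfrak M^{(n)}_{r*\mu},
\]
using \eqref{eq:history-conv}. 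Theorem~\ref{thm:isometry} then gives half-distance $\TV(\nu,r*\mu)$. The set $\Prob(G^n)$ is compact and the objective is continuous, so the minimum is attained and $\delta_{\rm out}\le\min_r$.

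\emph{Converse.} Take any $\Lambda\in\mathsf{OutCausal}_n$ and lower-bound the strategy distance by one particular tester: feed $|\Psi\rangle$ into each slot, keep the ancillas $R_1,\dots,R_n$, and measure at the end. For this tester:
\begin{itemize}
\item On branch $\bm g$, the state handed to the converter is $\bigotimes_j|\Psi_{g_j}\rangle$.
\item The converter acts only on the carriers $\mathcal H_j$ and its own memory, never on the $R_j$.
\end{itemize}
The key structural claim is that the final state on $\mathcal H^{\otimes n}\otimes R^{\otimes n}$ has the form
\[
\sum_{\bm g}\mu(\bm g)\,(\mathcal E\otimes\mathrm{id}_R)\bigl(|\Psi_{\bm g}\rangle\langle\Psi_{\bm g}|\bigr),
\]
where $\mathcal E$ is one fixed channel on $\mathcal H^{\otimes n}$ (the converter with its memory traced out). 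Since $|\Psi_{g}\rangle=(U_g\otimes I)|\Psi\rangle$, this equals $(\mathcal E\circ\mathcal U_{\bm g}\otimes\mathrm{id})(|\Psi\rangle\langle\Psi|^{\otimes n})$.

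Now define a measurement on the final system that reads out $\bm h$ in the orthonormal basis $\{|\Psi_{\bm h}\rangle\}$ on each slot pair. One could apply it directly, but the cleaner route is to twirl. By covariance, pre-composing with $\mathcal U_{\bm k}$ and post-composing with $\mathcal U_{\bm k}^{-1}$ on the outputs, with $\bm k$ averaged uniformly over $G^n$, does not increase the distance to $\mathfrak M_\nu$. The reason is that $\mathfrak M_\nu$ commutes with componentwise left translation only if the conjugation is right-handed; I would need to check this.

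A more robust alternative avoids twirling. Use the joint isometry between the seed output and the group algebra. Define
\[
r(\bm h)=\langle\Psi_{\bm h}|\,(\mathcal E\otimes\mathrm{id})(|\Psi_{\bm e}\rangle\langle\Psi_{\bm e}|)\,|\Psi_{\bm h}\rangle,
\]
which is a probability law on $G^n$ because the $|\Psi_{\bm h}\rangle$ form an orthonormal basis of the relevant support. Then show that for every $\bm g$, $(\mathcal E\otimes\mathrm{id})(|\Psi_{\bm g}\rangle\langle\Psi_{\bm g}|)$ produces the outcome distribution $\bm h\mapsto r(\bm h\bm g^{-1})$. This uses that $U_{\bm g}$ on $\mathcal H$ can be transferred to the ancilla side: with the regular-type seed, $(U_g\otimes I)|\Psi\rangle=(I\otimes\bar U^{R}_{g})|\Psi\rangle$ for a suitable right action on $R$, and the converter commutes with operations on $R$.

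Given this, the measured output law under $\Lambda\circ\mathfrak M_\mu$ is exactly $r*\mu$, while under $\mathfrak M_\nu$ it is $\nu$. Monotonicity of trace distance under measurement gives
\[
\tfrac12\|\mathfrak M_\nu-\Lambda\circ\mathfrak M_\mu\|_{\rm strat}\ge\TV(\nu,r*\mu)\ge\min_r\TV(\nu,r*\mu).
\]
Taking the infimum over $\Lambda$ yields the converse.

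\emph{Main obstacle.} The hard step is the transfer identity: justifying that the action of $U_{\bm g}$ on the carrier can be moved to a right action on the ancilla. This holds for the regular-type seed of Lemma~\ref{lem:orbit-criterion} via the $\rho$ commuting with $U_g$ and the purification symmetry. If the purification only realises this up to restriction to the orbit span, I would instead measure with the projections onto $|\Psi_{\bm h}\rangle$ plus a completing projector. I would then argue that the mass leaking into the complement cannot help, by bounding it inside $\TV$ as an extra outcome.

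Causality is used only to guarantee that $\mathcal E$ is independent of $\bm g$ and acts trivially on $R$. Adaptivity of the converter is therefore harmless.
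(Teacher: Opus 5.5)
Your proposal is correct and is essentially the paper's proof: the same sampled-correction achievability, and the same converse in which the converter is treated as one fixed channel on the outputs, the commuting seed ($[\rho,U_g]=0$) is fed into every slot, and the orbit-readout kernel is shown to depend only on $\bm h\bm g^{-1}$. You move $U_{\bm g}$ onto the untouched reference by the transpose trick, which holds exactly with $\bar U^R_g$ an anti-representation on $R$, whereas the paper expands the channel in Kraus operators and uses trace cyclicity.

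One correction: your $r$ is not a probability law in general. For nontrivial $G$ the orbit of the commuting seed spans a proper subspace of $(\mathcal H\otimes R)^{\otimes n}$, so the leakage outcome from your last paragraph is always needed, not just a fallback. Exactly as in the paper, covariance makes the retained mass $m$ independent of $\bm g$, and completing $r$ with any nonnegative $s$ of mass $1-m$ gives the bound.
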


\begin{proof}
Any output-side causal converter induces a CPTP map on the tensor product of all output systems, subject to additional causality constraints.  Dropping those constraints can only decrease the infimum.  The product representation of $G^n$ is orbit resolving with seed $|\Psi\rangle^{\otimes n}$.  To see the resulting lower bound directly, let $\rho$ be the reduced carrier state of $|\Psi\rangle$ and let $\Lambda(X)=\sum_aV_aXV_a^\dagger$ be an arbitrary global CPTP map.  Measuring the orthogonal history projectors after branch $\bm g$ gives the subtransition kernel
\begin{equation}
 K(\bm h|\bm g)=\sum_a
 \left|\Tr\!\left(\rho^{\otimes n}
 U_{\bm h}^\dagger V_aU_{\bm g}\right)\right|^2
 =r_0(\bm h\bm g^{-1}),
\end{equation}
where, explicitly,
\begin{equation}
 r_0(\bm x)=\sum_a
 \left|\Tr\!\left(\rho^{\otimes n}U_{\bm x^{-1}}V_a\right)\right|^2.
\end{equation}
The last equality follows by cyclicity and $[\rho,U_g]=0$; the inverse fixes the left-convolution convention for a non-Abelian group.  Its total mass $m\leq1$ is independent of $\bm g$; the missing mass is leakage outside the orbit subspace.  The measured output law is therefore $r_0*\mu$ plus one leakage outcome of probability $1-m$.  For any nonnegative completion $s$ of mass $1-m$, the probability law $r=r_0+s$ satisfies
\[
 \TV(\nu,r*\mu)\leq
 \tfrac12\bigl(\|\nu-r_0*\mu\|_1+1-m\bigr),
\]
which is exactly the total-variation distance after adjoining the leakage outcome to the target with zero weight.  Measurement contractivity and the parallel history tester therefore give the lower bound in \cref{eq:out-def}.

Conversely, sample $\bm r\sim r$ before the first slot and store it classically.  Applying $U_{r_j}$ after slot $j$ maps $U_{g_j}$ to $U_{r_j}U_{g_j}=U_{r_jg_j}$, hence maps $\mu$ to the left convolution $r*\mu$.  This converter is causal, and \cref{thm:isometry} makes its error exactly $\TV(\nu,r*\mu)$.
\end{proof}

The theorem does not include superprocesses that alter future input ports or temporally coarse-grain the process.  That larger transformation class is physically different and can change process distinguishability in ways absent here \cite{Zambon2024}.

\section{Code-coset temporal sectors}

Set $G=(\F_q,+)$, with $q$ a prime power.  Let $C\le\F_q^n$ be a linear $[n,k,d]_q$ code and
\begin{align}
 C^\perp&=\{y:y\cdot x=0\ \text{for all }x\in C\},\nonumber\\
 d^\perp&=\min_{0\ne y\in C^\perp}\wt(y).
\end{align}
For a coset $a+C$, let $\mu_a$ be its uniform law.

\begin{lemma}[Dual distance and local uniformity]
\label{lem:uniformity}
For every coordinate set $S$ with $|S|<d^\perp$, the projection of $C$ onto $\F_q^S$ is surjective with equal fibers.  Hence the $S$-marginal of every $\mu_a$ is the uniform law on $\F_q^S$, independent of the coset.
\end{lemma}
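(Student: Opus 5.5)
The plan is to prove surjectivity of the projection $\pi_S:C\to\F_q^S$ by a duality argument, and then obtain equal fibers and the coset statement from linearity alone.

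\textbf{Surjectivity.} Since $\pi_S$ is $\F_q$-linear, its image is a subspace $W\le\F_q^S$. Suppose $W\ne\F_q^S$. Then there is a nonzero $z\in\F_q^S$ with $z\cdot w=0$ for every $w\in W$, because the annihilator of a proper subspace is nontrivial. Extend $z$ by zeros outside $S$ to a vector $y\in\F_q^n$. For every $x\in C$ we have $y\cdot x=z\cdot\pi_S(x)=0$, so $y\in C^\perp$. But $y\ne0$ and $\wt(y)\le|S|<d^\perp$, which contradicts the definition of $d^\perp$. Hence $\pi_S$ is surjective.

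\textbf{Equal fibers.} Each fiber of a surjective linear map is a coset of its kernel, so every fiber has exactly $|\ker\pi_S|=|C|/q^{|S|}$ elements. Thus the uniform law on $C$ pushes forward to the uniform law on $\F_q^S$.

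\textbf{Cosets.} For the coset $a+C$, the $S$-marginal of $\mu_a$ is the law of $\pi_S(a)+\pi_S(x)$ with $x$ uniform on $C$. This is a translate of the uniform law on $\F_q^S$, and translation preserves uniformity. So every $\mu_a$ has the same uniform $S$-marginal, independent of $a$.

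There is no real obstacle here. The only point needing care is the annihilator step, which uses the nondegeneracy of the standard bilinear form on $\F_q^S$. This is the classical orthogonal-array and dual-distance equivalence the paper cites.
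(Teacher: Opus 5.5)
Your proof is correct and follows the paper's route: the surjectivity step is the paper's argument, in which a nonzero functional annihilating a proper image, extended by zero outside $S$, yields a nonzero dual codeword of weight at most $|S|<d^\perp$. You also write out the equal-fiber step (fibers are cosets of the kernel) and the coset-translation step, which the paper leaves implicit.
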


\begin{proof}
If the projection were not surjective, a nonzero linear functional on $\F_q^S$ would annihilate its range.  Extending that functional by zero outside $S$ produces a nonzero dual codeword supported in $S$, contradicting $|S|<d^\perp$.
\end{proof}

\begin{theorem}[Strict temporal-depth hierarchy]
\label{thm:hierarchy}
For distinct cosets $a+C$ and $b+C$:
\begin{enumerate}[label=(\roman*),leftmargin=2.2em]
\item all reduced processes on fewer than $d^\perp$ time slots are identical;
\item every tester whose active interaction support contains fewer than $d^\perp$ slots has zero discrimination advantage;
\item the full processes are perfectly distinguishable:
\begin{equation}
 \frac12\|\mathfrak M_{\mu_a}^{(n)}-
 \mathfrak M_{\mu_b}^{(n)}\|_{\rm strat}=1.
 \label{eq:perfect}
\end{equation}
\end{enumerate}
There are $q^{n-k}$ pairwise orthogonal sectors, and a branch-resolving tester followed by a parity-check matrix reads the sector exactly.
\end{theorem}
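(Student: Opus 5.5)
The plan is to reduce each item to a statement about the history laws $\mu_a,\mu_b$ on $\F_q^n$, using \cref{lem:uniformity} for the local claims and \cref{thm:isometry} for the global one.

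For (i), fix a slot set $S$ with $|S|<d^\perp$. The reduced process on $S$ is obtained from \eqref{eq:history-process} by feeding fixed inputs at the slots outside $S$ and discarding their outputs. Each $\operatorname{Ad}_{U_{g_j}}$ is trace preserving, so the reduced process equals $\sum_{\bm g_S}\mu_a^S(\bm g_S)\bigotimes_{j\in S}\operatorname{Ad}_{U_{g_j}}$, where $\mu_a^S$ is the $S$-marginal. I will check this at the level of the flattened channel: tracing the output ports of the slots in $S^c$ sums out those coordinates. By \cref{lem:uniformity}, $\mu_a^S=\mu_b^S$ is uniform on $\F_q^S$, so the reduced processes coincide.

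For (ii), the subtle point is to make ``active interaction support'' precise. I take it to mean that, on the slots outside $S$, the tester feeds inputs that may be correlated with its memory but discards the outputs without further interaction. More precisely, nothing downstream depends on those outputs. I will argue that any such tester is equivalent to a tester of the reduced process on $S$. The key point is that an inactive slot's output is traced out. Because the branch at that slot is a unitary channel, tracing its output is the same as tracing its input, whatever correlation that input had with the memory. Hence the inactive slots act as the identity on the tester's marginal state. This holds whether the discarded slot lies before, between or after active slots, because the process is a mixture of product channels across slots. Conditioned on $\bm g$, discarding slot $j$ therefore commutes with everything else. Averaging over $\bm g$ then leaves a tester of the reduced process on $S$, and (i) gives zero advantage. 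This causal bookkeeping is the step I expect to need the most care; writing the argument branch-by-branch in $\bm g$ before averaging keeps it clean.

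For (iii), distinct cosets are disjoint, so $\mu_a$ and $\mu_b$ have disjoint supports and $\TV(\mu_a,\mu_b)=1$. \cref{thm:isometry} then gives \eqref{eq:perfect}. There are $|\F_q^n/C|=q^{n-k}$ cosets, and the same argument makes every pair orthogonal. For the readout, the orthonormal seed $|\Psi\rangle^{\otimes n}$ yields $\bm g$ exactly after measurement in the orbit basis. Applying a parity-check matrix $H$ with $\ker H=C$ then gives $H\bm g=Ha$, the syndrome, which identifies the coset uniquely.
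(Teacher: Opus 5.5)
Your proposal is correct and follows essentially the paper's proof. Item (i) follows from \cref{lem:uniformity} once the reduced process is identified with the marginal mixture. Item (ii) reduces a fixed-support tester to a tester of that reduced comb; your per-$\bm g$ trace-preservation argument spells out what the paper asserts in one line. Item (iii) and the syndrome readout follow from $\TV(\mu_a,\mu_b)=1$ and \cref{thm:isometry}.

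The only extra in the paper is a remark that randomized or outcome-dependent choices of fewer than $d^\perp$ active slots are also covered, by conditioning on each decision-tree path. Your fixed-$S$ reading of ``active support'' does not address this case, but your branch-by-branch conditioning extends to it directly.
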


\begin{proof}
The first statement is \cref{lem:uniformity}.  A tester supported on a fixed set of slots is a functional of that reduced comb, proving (ii).  Randomized or outcome-dependent choices of at most that many slots also give identical transcripts by conditioning on each decision-tree path.  Distinct cosets are disjoint, so their total-variation distance is one; \cref{thm:isometry} proves (iii).  Measuring the orthogonal branch at each time recovers $\bm g$, and a parity-check matrix $H$ returns the syndrome $H\bm g=Ha$.
\end{proof}

For the parity code
\begin{equation}
 C_{\rm par}=\{x\in\F_q^n:\textstyle\sum_jx_j=0\},
 \label{eq:parity-code}
\end{equation}
we have $d^\perp=n$.  Its $q$ sectors are therefore invisible to every proper temporal marginal and perfectly distinguishable at depth $n$.

\section{Optimal hiding capacity}

The support-counting argument can be strengthened to an operational finite-error statement.  Let a uniformly distributed sector label $J\in\{1,\ldots,M\}$ select a history $X\in\F_q^n$ according to $\mu_J$.  All logarithms and entropies in this section use the same base.  For $S\subseteq[n]$, define the depth-$S$ leakage
\begin{equation}
 L(S)=I(J;X_S).
 \label{eq:leakage}
\end{equation}

\begin{proposition}[Operational meaning of leakage]
\label{prop:leakage}
For an orbit-resolving carrier, $L(S)$ is the maximum mutual information between $J$ and the classical output of a tester restricted to the slots in $S$.
\end{proposition}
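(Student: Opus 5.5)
We prove the two inequalities separately: achievability of $L(S)$ by a specific tester, and a data-processing bound showing that no tester confined to $S$ does better.

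\emph{Achievability.} Feed one copy of the orbit seed $|\Psi\rangle$ into each slot $j\in S$, keep the ancillas, and measure in the orthonormal orbit basis $\{|\Psi_{g}\rangle\}$ of \cref{lem:orbit-criterion}. As in the proof of \cref{thm:isometry}, conditional on the sampled history $X$, the post-measurement outcome is exactly $X_S$ with probability one. The tester's classical output therefore equals $X_S$, and its mutual information with $J$ is $I(J;X_S)=L(S)$. This tester is parallel and uses only slots in $S$.

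\emph{Converse.} Fix any tester whose interaction support lies in $S$; it may be adaptive, with quantum memory and entangled inputs. Its unused slots can be traced out, so only the reduced comb $\mathfrak M^{(S)}_{\mu_J}$ matters. This reduced comb equals $\sum_{x_S}\mu_J^{S}(x_S)\bigotimes_{j\in S}\operatorname{Ad}_{U_{x_j}}$, where $\mu_J^S$ is the $S$-marginal. The equality holds because the branches on slots outside $S$ are unitary channels, and discarding their outputs (after the tester's fixed inputs there) gives the identity on probabilities. The key step is the linearity in the proof of \cref{thm:isometry}. For a fixed tester $T$, the outcome distribution is $P(y\mid J)=\sum_{x_S}\mu_J^S(x_S)\,p_T(y\mid x_S)$, where $p_T(y\mid x_S)$ is the outcome law of $T$ run against the deterministic unitary comb with branch history $x_S$. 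This comb does not depend on $J$. Hence $J\to X_S\to Y$ is a Markov chain, and the classical data-processing inequality gives $I(J;Y)\le I(J;X_S)$.

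\emph{Main obstacle.} The step needing care is justifying that the kernel $p_T(y\mid x_S)$ is well defined for adaptive testers. The process is a classical mixture over histories. For each fixed $x_S$, the $S$-restricted process is a genuine product of unitary channels, so $T$ produces a valid outcome law, and by multilinearity of the tester pairing the mixture's outcome law is the $\mu^S_J$-average of these. The same conditioning handles randomized or outcome-dependent slot choices within $S$, using the decision-tree argument from \cref{thm:hierarchy}(ii). The maximum is attained by the achievability tester above, so the supremum equals $L(S)$.
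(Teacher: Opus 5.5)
Your proposal is correct and follows the paper's proof. Achievability comes from the orbit-seed measurement recovering $X_S$ exactly, and the converse from the $J$-independent kernel $p_T(y\mid x_S)$, which yields the Markov chain $J\to X_S\to Y$ and data processing; your remarks on tracing out unused slots and on multilinearity of the tester pairing just spell out what the paper compresses into ``conditional on $X_S=x_S$.''
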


\begin{proof}
The branch-resolving seed and orbit measurement recover $X_S$ exactly, so the value in \cref{eq:leakage} is attainable.  Conversely, conditional on $X_S=x_S$, any restricted tester produces a state and then a classical outcome through a channel that is independent of $J$.  The Markov chain $J\to X_S\to Y$ and data processing give $I(J;Y)\le I(J;X_S)$.
\end{proof}

\begin{theorem}[Finite-error temporal hiding capacity]
\label{thm:approx-capacity}
Suppose a decoder $\widehat J(X)$ using the full history has average error probability at most $\epsilon<1$.  If, for some $S\subseteq[n]$ with $|S|=t$,
\begin{equation}
 I(J;X_S)\le\eta,
 \label{eq:eta-leak}
\end{equation}
then
\begin{equation}
 \boxed{
 (1-\epsilon)\log M
 \le (n-t)\log q+\eta+h_2(\epsilon)}.
 \label{eq:approx-capacity}
\end{equation}
Here $h_2$ is binary entropy in the chosen logarithmic base.  The same conclusion holds if \cref{eq:eta-leak} is required for every $t$-set; only one such set is needed for the converse.
\end{theorem}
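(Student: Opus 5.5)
The plan is a Fano argument combined with the chain rule that splits the history into the selected slots $S$ and the complementary slots $\bar S=[n]\setminus S$. Since $J$ is uniform on $\{1,\ldots,M\}$, we have $H(J)=\log M$. The decoder $\widehat J(X)$ has average error at most $\epsilon$, so Fano's inequality gives
\begin{equation*}
 H(J\mid X)\le h_2(\epsilon)+\epsilon\log(M-1)\le h_2(\epsilon)+\epsilon\log M .
\end{equation*}
Using the same logarithmic base throughout, this yields $I(J;X)\ge(1-\epsilon)\log M-h_2(\epsilon)$. If $\epsilon>1/2$, then $h_2(\epsilon)$ is not monotone in $\epsilon$. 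This causes no trouble, because Fano holds with $h_2(P_e)$ at the actual error probability $P_e\le\epsilon$. To pass from $P_e$ to $\epsilon$, I would use that $h_2(P_e)+P_e\log(M-1)$ is nondecreasing in $P_e$ on $[0,1-1/M]$, and treat $\epsilon\ge 1-1/M$ as a trivial case. Indeed, in that case $(1-\epsilon)\log M\le\frac{\log M}{M}$, and this is at most $h_2(1/M)\le h_2(\epsilon)$ whenever $\epsilon\le 1/2$; for larger $\epsilon$ I would check it directly. I expect this boundary bookkeeping to be the only delicate point. The simplest route is to state Fano in the form $H(J\mid X)\le h_2(\epsilon)+\epsilon\log M$, valid for $\epsilon\le 1-1/M$, and handle the remaining range separately.

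Next, I write $X=(X_S,X_{\bar S})$ and apply the chain rule:
\begin{equation*}
 I(J;X)=I(J;X_S)+I(J;X_{\bar S}\mid X_S).
\end{equation*}
The first term is at most $\eta$ by \cref{eq:eta-leak}. The second term is bounded by
\begin{equation*}
 I(J;X_{\bar S}\mid X_S)\le H(X_{\bar S}\mid X_S)\le H(X_{\bar S})\le (n-t)\log q,
\end{equation*}
because $X_{\bar S}$ takes values in $\F_q^{n-t}$. Combining this with the Fano lower bound gives \cref{eq:approx-capacity}.

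Finally, the argument uses \cref{eq:eta-leak} for only one $t$-set. The version that requires it for every $t$-set is therefore weaker as a hypothesis and follows immediately. By \cref{prop:leakage}, $\eta$ bounds the information available to any tester restricted to $S$, so the result applies operationally as stated.
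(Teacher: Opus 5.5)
Your argument is the paper's proof: the chain-rule bound $I(J;X)=I(J;X_S)+I(J;X_{S^c}\mid X_S)\le\eta+(n-t)\log q$ is combined with Fano and $H(J)=\log M$. Your point that Fano must be applied at the true error $P_e\le\epsilon$, via monotonicity of $h_2(p)+p\log(M-1)$ on $[0,1-1/M]$, fills a step the paper skips. In the residual regime $\epsilon\ge1-1/M$, however, $h_2(\epsilon)=h_2(1-\epsilon)\le h_2(1/M)$, so your chain ending in $h_2(1/M)\le h_2(\epsilon)$ runs the wrong way, and its proviso $\epsilon\le1/2$ is essentially vacuous there. The correct check is that $\delta=1-\epsilon\le1/M$ gives $\delta\log M\le-\delta\log\delta\le h_2(\delta)=h_2(\epsilon)$, which suffices since $\eta\ge I(J;X_S)\ge0$. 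Also, requiring the leakage bound on every $t$-set is a stronger hypothesis, not a weaker one, which is why that version follows immediately.
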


\begin{proof}
The chain rule and the alphabet size of $X_{S^c}$ give
\begin{align}
 I(J;X)
 &=I(J;X_S)+I(J;X_{S^c}|X_S)\nonumber\\
 &\le \eta+H(X_{S^c})
 \le \eta+(n-t)\log q.
 \label{eq:MI-upper}
\end{align}
Fano's inequality \cite{CoverThomas2006} gives
\begin{equation}
 H(J|X)\le h_2(\epsilon)+\epsilon\log(M-1)
 \le h_2(\epsilon)+\epsilon\log M.
\end{equation}
Substitute these two bounds into $H(J)=I(J;X)+H(J|X)=\log M$ and rearrange.
\end{proof}

\begin{corollary}[From marginal distance to capacity]
\label{cor:tv-capacity}
Suppose that for a fixed $t$-set $S$ there is a law $\nu_S$ such that
\begin{equation}
 \TV((\mu_j)_S,\nu_S)\le\delta
 \qquad(j=1,\ldots,M).
\end{equation}
Writing $Q=q^t$ and assuming $0\le\delta\le1-1/Q$, one may take
\begin{equation}
 \eta=2\bigl[\delta\log(Q-1)+h_2(\delta)\bigr]
 \label{eq:tv-to-info}
\end{equation}
in \cref{eq:approx-capacity}.
\end{corollary}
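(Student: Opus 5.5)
The plan is to bound $I(J;X_S)$ by a mixture-level continuity argument and then invoke \cref{thm:approx-capacity}. Write $P_j=(\mu_j)_S$ for the conditional law of $X_S$ given $J=j$, and let $\bar P=\frac1M\sum_jP_j$ be the marginal law of $X_S$. All these laws live on the alphabet $\F_q^S$, which has $Q=q^t$ symbols. We use the variational form
\[
 I(J;X_S)=\frac1M\sum_j\bigl[H(\bar P)-H(P_j)\bigr]
 =H(\bar P)-\frac1M\sum_jH(P_j).
\]
It therefore suffices to show that each term satisfies $|H(\bar P)-H(P_j)|\le\eta$.

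First compare each law with the reference $\nu_S$. By hypothesis, $\TV(P_j,\nu_S)\le\delta$. Since total variation is jointly convex, the mixture also satisfies $\TV(\bar P,\nu_S)\le\frac1M\sum_j\TV(P_j,\nu_S)\le\delta$. Next apply the sharp Fannes--Audenaert continuity bound for classical Shannon entropy on a $Q$-letter alphabet. For laws at total-variation distance $\tau\le1-1/Q$ it gives $|H(P)-H(P')|\le\tau\log(Q-1)+h_2(\tau)$. On $[0,1-1/Q]$ the right-hand side is nondecreasing in $\tau$. Hence both $|H(P_j)-H(\nu_S)|$ and $|H(\bar P)-H(\nu_S)|$ are at most $\delta\log(Q-1)+h_2(\delta)$; this is exactly where the assumption $\delta\le1-1/Q$ is used. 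The triangle inequality through $H(\nu_S)$ then gives $H(\bar P)-H(P_j)\le 2[\delta\log(Q-1)+h_2(\delta)]$.

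Averaging over $j$ yields $I(J;X_S)\le\eta$ with $\eta$ as in \cref{eq:tv-to-info}. Substituting this $\eta$ into \cref{thm:approx-capacity}, which needs only this one set $S$, gives the claim.

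The main obstacle is stating the continuity bound with the right hypotheses. Two points need care. The first is monotonicity of $\tau\mapsto\tau\log(Q-1)+h_2(\tau)$ up to $1-1/Q$: its derivative is $\log(Q-1)+\log\frac{1-\tau}{\tau}$, which is nonnegative exactly when $\tau\le1-1/Q$. The second is the convention that total variation is half the $\ell_1$ distance, which matches the normalization in the Audenaert bound. If a tighter constant were wanted, one could avoid the factor $2$ by comparing $P_j$ directly with $\bar P$. However, $\TV(P_j,\bar P)$ may reach $2\delta$, which could exceed $1-1/Q$, so routing through $\nu_S$ as above is the safe choice.
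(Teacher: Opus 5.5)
Your proposal is correct and follows the paper's own proof. Both use the decomposition $I(J;X_S)=H(\overline\mu_S)-\frac1M\sum_jH((\mu_j)_S)$, note that each $(\mu_j)_S$ and the mixture lie within $\delta$ of $\nu_S$, and apply Audenaert's sharp continuity bound twice through $H(\nu_S)$; you also spell out two steps the paper leaves implicit, namely convexity of total variation for the mixture and monotonicity of $\tau\log(Q-1)+h_2(\tau)$ on $[0,1-1/Q]$, which is where the hypothesis $\delta\le1-1/Q$ is used.
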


\begin{proof}
Let $\overline\mu_S=M^{-1}\sum_j(\mu_j)_S$.  Both $(\mu_j)_S$ and $\overline\mu_S$ are within $\delta$ of $\nu_S$.  The sharp finite-alphabet entropy continuity bound \cite{Audenaert2007} applied twice to
\begin{equation}
 I(J;X_S)=H(\overline\mu_S)-\frac1M\sum_jH((\mu_j)_S)
\end{equation}
gives \cref{eq:tv-to-info}.
\end{proof}

\begin{corollary}[Exact capacity and MDS saturation]
\label{thm:capacity}
If the sectors are globally perfectly readable and reveal no information from every set of $t$ fixed slots, then
\begin{equation}
 \boxed{M\le q^{n-t}}.
 \label{eq:capacity}
\end{equation}
Whenever an $[n,t]_q$ MDS code exists, its cosets attain equality and are invisible to all tests of depth at most $t$.
\end{corollary}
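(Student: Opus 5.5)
The corollary has two parts, and I will handle them separately. The converse is the $\epsilon=\eta=0$ specialization of \cref{thm:approx-capacity}, and achievability is a coset construction from \cref{thm:hierarchy}.

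\emph{Converse.} I read "globally perfectly readable" as the existence of a decoder $\widehat J(X)$ with error $\epsilon=0$. I read "no information from $t$ fixed slots" as $I(J;X_S)=0$ for some (indeed every) $t$-set $S$, so I may take $\eta=0$. Since $h_2(0)=0$, \cref{eq:approx-capacity} gives $\log M\le (n-t)\log q$, which is \cref{eq:capacity}. I will also note that \cref{prop:leakage} makes this operational: zero leakage means no tester confined to $S$ extracts any information about $J$. The uniform prior on $J$ used in the theorem is harmless, because perfect readability and zero leakage are both properties of the family $\{\mu_j\}$ that hold for any prior.

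\emph{Achievability.} Let $C$ be an $[n,t]_q$ MDS code, so its minimum distance is $n-t+1$. I will use the standard fact that the dual of an MDS code is MDS. Then $C^\perp$ is $[n,n-t]_q$ with minimum distance $d^\perp=t+1$. By \cref{lem:uniformity}, every set $S$ with $|S|\le t<d^\perp$ sees the uniform marginal on $\F_q^S$ for every coset. Hence $I(J;X_S)=0$ for all such $S$, and by \cref{thm:hierarchy}(i)--(ii) every tester of depth at most $t$ has zero advantage. The number of cosets is $q^{n-t}$. Distinct cosets are disjoint, so the syndrome decoder of \cref{thm:hierarchy} reads the sector with zero error, and \cref{thm:isometry} gives perfect operational distinguishability. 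This attains equality in \cref{eq:capacity}.

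The only point needing justification is $d^\perp=t+1$. I would prove it directly rather than cite it. An $[n,k]$ code is MDS if and only if every $k$ columns of a generator matrix are linearly independent. A nonzero dual codeword of weight at most $t$ would give a linear dependency among at most $t$ columns of the generator of $C$, contradicting the MDS property; this shows $d^\perp\ge t+1$. The matching upper bound $d^\perp\le t+1$ is the Singleton bound for $C^\perp$, although only the lower bound is needed here.
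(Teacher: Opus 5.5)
Your proposal is correct and follows the paper's proof. The converse is exactly the $\epsilon=\eta=0$ specialization of \cref{thm:approx-capacity}, and achievability uses the $q^{n-t}$ disjoint cosets of an $[n,t]_q$ MDS code with uniform $t$-coordinate marginals; you additionally justify that uniformity via $d^\perp=t+1$ and \cref{lem:uniformity}, which the paper simply asserts (it also follows directly from your own observation that every $t$ generator columns are independent, so each $t$-coordinate projection of $C$ is a bijection).
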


\begin{proof}
Set $\eta=0$ and $\epsilon=0$ in \eqref{eq:approx-capacity} to obtain $\log M\leq(n-t)\log q$.  An $[n,t]_q$ MDS code has uniform projections on every $t$ coordinates and disjoint cosets, so its $q^{n-t}$ cosets attain equality.
\end{proof}

The corollary no longer assumes uniform marginals or disjoint supports explicitly: perfect decoding makes the label a function of the full history, and zero leakage is the operational condition.  In the original support formulation one obtains the finer elementary bound
\begin{equation}
 M\,|\supp\nu_S|\le q^n
 \label{eq:support-capacity}
\end{equation}
whenever all sectors have a common $S$-marginal $\nu_S$ and disjoint supports.  Uniformity on $\F_q^t$ recovers \cref{eq:capacity}.  Generalized Reed--Solomon codes attain it in their standard existence range.

\begin{proposition}[Equality structure: large sets of orthogonal arrays]
\label{prop:equality-structure}
Suppose $M=q^{n-t}$ sectors have pairwise disjoint supports and every
$t$-coordinate marginal is uniform on $\F_q^t$.  Then every sector
support has exactly $q^t$ histories, the supports partition
$\F_q^n$, and projection onto any $t$ coordinates is a bijection on
each support.  Equivalently, the sectors form a large set of
index-one orthogonal arrays $\operatorname{OA}(q^t,n,q,t)$.  Linear
MDS cosets are one realization, but linearity is not forced.
\end{proposition}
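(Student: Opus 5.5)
The argument is a counting argument based on the uniform $t$-marginal. Fix any sector $j$ and any $t$-set $S$, and write $A_j=\supp\mu_j$. Uniformity of the $S$-marginal of $\mu_j$ on $\F_q^t$ means every $x_S\in\F_q^t$ has positive probability. So the projection $\pi_S:A_j\to\F_q^t$ is surjective, and therefore $|A_j|\ge q^t$. The supports are pairwise disjoint and there are $M=q^{n-t}$ of them, so
\[
 q^n\ge\sum_j|A_j|\ge q^{n-t}q^t=q^n.
\]
Equality must hold throughout. Hence $|A_j|=q^t$ for every $j$, and the supports cover $\F_q^n$, i.e.\ they partition it. This is the same count as \cref{eq:support-capacity}, now run in the equality case.

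Next comes bijectivity. $\pi_S$ restricted to $A_j$ is a surjection between sets of equal size $q^t$, so it is a bijection. This holds for every $t$-set $S$. Reading this as an array, with rows indexed by $A_j$, $n$ columns and symbols in $\F_q$, every $t$ columns contain each $t$-tuple exactly once. That is precisely an index-one $\operatorname{OA}(q^t,n,q,t)$. Since the $A_j$ partition $\F_q^n$, the family is a large set of such arrays. For the ``equivalently'' direction, note that a partition of $\F_q^n$ into index-one OAs gives disjoint supports. Taking the uniform law on each support then yields uniform $t$-marginals, because each $t$-tuple appears once among the $q^t$ equally weighted rows.

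One subtlety needs care: the hypothesis speaks of sector laws, not uniform laws on supports. I would therefore also show that each $\mu_j$ is uniform on its support. Bijectivity of $\pi_S$ on $A_j$ gives $\mu_j(x)=(\mu_j)_S(\pi_S x)=q^{-t}$ for every $x\in A_j$. So the laws are forced to be uniform on their supports, and the sector law is the OA distribution.

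Linear MDS cosets realize the structure by \cref{thm:capacity}. For non-forcing of linearity, I would give a small example: a nonlinear Latin-square construction with $q=3$, $t=1$, where a partition of $\F_q^n$ into $q^{n-1}$ permutation-type codes works. The simplest instance is $n=2$, $t=1$. Partition $\F_q^2$ into the graphs of $q$ permutations forming a Latin square that is not the addition table's translates of a line. Such a square exists for $q\ge 4$, or for non-prime-field-linear choices with $q=4$. This is the only step needing an explicit object. I expect it to be the main, if minor, obstacle, and I would check a $4\times4$ Latin square not isotopic to a linear one, or simply use $q=4$ with a non-additive permutation.
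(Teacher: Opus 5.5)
Your counting argument is correct and is essentially the paper's proof: surjectivity of each $t$-projection gives $|A_j|\ge q^t$, tightness of $q^n\ge\sum_j|A_j|\ge q^n$ forces equality, and equal cardinalities upgrade surjection to bijection on every $t$-set. Your added observation that bijectivity forces $\mu_j(x)=q^{-t}$ on $A_j$ is a worthwhile point the paper leaves implicit.

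The paper only asserts that linearity is not forced, so your example step is extra, and part of it needs correcting. The case $q=3$, $n=2$ cannot work: every permutation of $\F_3$ is affine, and affine graphs with different slopes meet, so every such partition consists of MDS cosets. A working example is over $\F_4$ with $\omega^2=\omega+1$: the graphs of $x$, $x+1$, $x^2+\omega$, $x^2+\omega^2$ partition $\F_4^2$ into index-one $\operatorname{OA}(4,2,4,1)$'s, and $x\mapsto x^2+\omega$ is not $\F_4$-affine. Note also that every permutation of $\F_4$ is $\F_2$-affine, so your ``non-additive permutation'' must be read as non-$\F_4$-affine.
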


\begin{proof}
Uniformity on one $t$-set requires at least one support point over each
of its $q^t$ values, so every sector support has size at least $q^t$.
Disjointness and $M=q^{n-t}$ give
\begin{equation}
 q^n\ge\sum_{j=1}^M|\supp\mu_j|
 \ge q^{n-t}q^t=q^n.
\end{equation}
Every inequality is therefore an equality.  Each support has size
$q^t$, the supports cover the history space, and every one of the
$q^t$ projected values has exactly one preimage.  Repeating the
argument for every $t$-set gives the orthogonal-array property.
\end{proof}

\begin{figure}[t]
\centering
\includegraphics[width=\columnwidth]{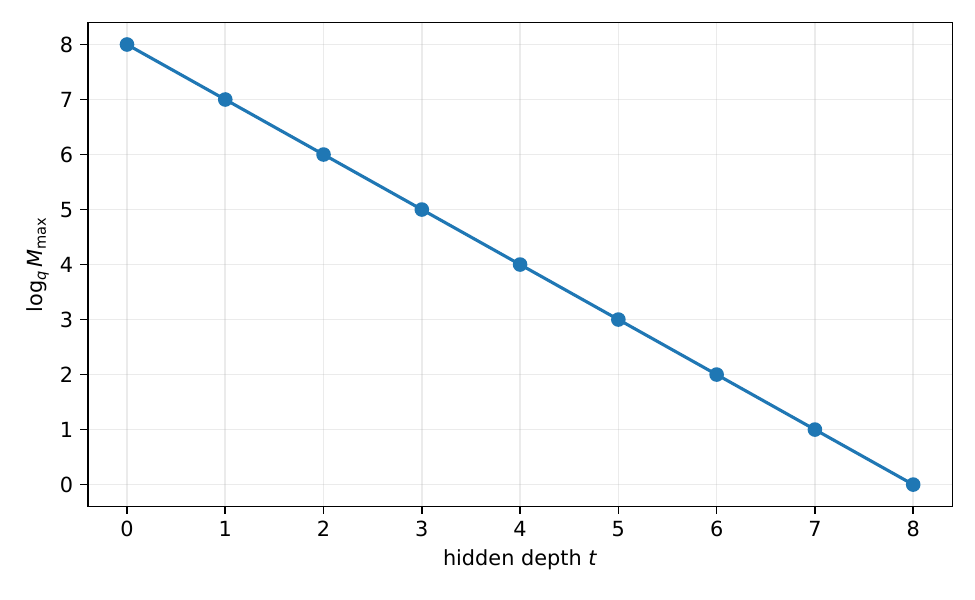}
\caption{The maximum number of globally orthogonal sectors with common uniform marginals on every $t$-slot set is $q^{n-t}$.  MDS-code cosets attain the bound.}
\label{fig:capacity}
\end{figure}

\begin{theorem}[Nonlinear privacy--distance Singleton bound]
\label{thm:nonlinear-singleton}
Let $M\ge2$ history sectors on $\F_q^n$ have pairwise support distance
at least $d_Z$.  If their marginals on one coordinate set $A$ of size
$t$ are identical, then
\begin{equation}
 \boxed{\log_qM+t+d_Z-1\le n}.
 \label{eq:nonlinear-singleton}
\end{equation}
If the sector must be recovered after every pattern of at most $e$
symbol errors and $f$ known erasures, then $d_Z\ge2e+f+1$ and
\begin{equation}
 \boxed{\log_qM+t+2e+f\le n}.
 \label{eq:nonlinear-protection}
\end{equation}
No linearity, uniformity, or additive group structure is required.
\end{theorem}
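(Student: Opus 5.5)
The plan is a puncturing argument that combines the common marginal on $A$ with the distance condition.

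First fix the setup. Let $\mu_1,\ldots,\mu_M$ be the sector laws with supports $Z_j$. By hypothesis $d_H(x,y)\ge d_Z$ whenever $x\in Z_i$, $y\in Z_j$ and $i\ne j$. All sectors share one marginal $\nu_A$ on $A$ with $|A|=t$. Since $M\ge2$ forces distinct supports to be at distance at least $d_Z\ge1$, we may assume $t+d_Z-1\le n$. Otherwise the bound needs a separate trivial check: if $A$ covered all coordinates, identical marginals would contradict disjoint supports.

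Next choose the deleted coordinates. Pick a set $B\subseteq[n]\setminus A$ with $|B|=d_Z-1$, and let $P=[n]\setminus(A\cup B)$, so $|P|=n-t-d_Z+1$. Fix a single value $a\in\supp\nu_A$. This is the key step. Because every sector has the same $A$-marginal $\nu_A$ and $\nu_A(a)>0$, each sector $j$ has some support point $x^{(j)}\in Z_j$ with $x^{(j)}_A=a$.

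Then run the injectivity argument. Consider the map $j\mapsto x^{(j)}_P\in\F_q^P$. If $x^{(i)}_P=x^{(j)}_P$ for some $i\ne j$, the two points would agree on $A\cup P$. They could then differ only on $B$, giving $d_H(x^{(i)},x^{(j)})\le d_Z-1$, which contradicts the support distance. So the map is injective, $M\le q^{n-t-d_Z+1}$, and taking $\log_q$ gives the first boxed inequality. This uses no linearity, no uniformity and no group structure. The only property of the marginal used is that one common value $a$ lies in every sector's $A$-projection, which is exactly where identical marginals enter.

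For the protection statement, I first show that correctability forces $d_Z\ge2e+f+1$; this is the step I expect to need the most care. Suppose $x\in Z_i$ and $y\in Z_j$ with $i\neq j$ and $d_H(x,y)\le 2e+f$. Take an erasure pattern $F$ of size at most $f$ inside the disagreement set, and split the remaining at most $2e$ disagreeing positions into two sets $E_1,E_2$ of size at most $e$. Build a received word $w$ that is erased on $F$, agrees with $x$ on $E_1$, agrees with $y$ on $E_2$, and agrees with both elsewhere. Then $w$ is obtained from $x$ by at most $e$ errors plus the erasures $F$, and also from $y$ in the same way. A decoder that must succeed on every such pattern would have to output both $i$ and $j$, which is impossible. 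Hence $d_Z\ge2e+f+1$, and substituting into the first bound gives $\log_qM+t+2e+f\le n$. The only subtlety is the edge case where the disagreement set is smaller than $f$; then $F$ is simply the whole disagreement set, and the argument goes through unchanged.
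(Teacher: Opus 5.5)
Your argument is essentially the paper's proof: fix one value in the support of the common $A$-marginal, pick a history from each sector lying over it, delete $A$ and a further $d_Z-1$ coordinates, and use the distance to get injectivity, while your explicit confusable received word spells out the paper's one-line necessity claim $d_Z\ge 2e+f+1$. The one repair is your preliminary case split: if $t+d_Z-1>n$ the bound would be false rather than trivially true, and the case $A=[n]$ does not cover this, so argue as the paper does that $x^{(1)}$ and $x^{(2)}$ agree on $A$ yet lie at distance at least $d_Z$, whence $d_Z\le n-t$ and the set $B$ exists.
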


\begin{proof}
Choose a value $y$ in the support of the common $A$-marginal.  For
each sector $j$, select a history $x_j$ in its support with
$(x_j)_A=y$.  Distinct selected histories have Hamming distance at
least $d_Z$ and agree on $A$.  Consequently $d_Z\le n-t$.  Delete the
$t$ coordinates in $A$ and any further set $B\subseteq A^c$ of
$d_Z-1$ coordinates.  The remaining restrictions of the $x_j$ are
distinct: equality of two restrictions would make the corresponding
histories differ only inside $B$, contradicting their distance.
Hence
\begin{equation}
 M\le q^{n-t-(d_Z-1)},
\end{equation}
which proves \cref{eq:nonlinear-singleton}.  Two sector histories
compatible with the same received word after $e$ errors and $f$
erasures could differ in at most $2e+f$ coordinates.  Thus unique
recovery requires and is guaranteed by $2e+f<d_Z$.
\end{proof}

\section{Robust sector readout requires nested codes}

The primal distance of $C$ separates histories \emph{inside one coset}; it does not separate different cosets.  Indeed, if all cosets of $C$ are used as labels, their pairwise set distance is one.  A single symbol error can therefore change the syndrome to another valid sector.  Simultaneous hiding and adversarial readout protection requires a nested pair.

Let
\begin{equation}
 \begin{gathered}
 C\subset D\subseteq\F_q^n,\qquad \dim C=k,\quad \dim D=K,\\
 R=K-k.
 \end{gathered}
 \label{eq:nested}
\end{equation}
and use the $q^R$ cosets $a+C$ with $a\in D/C$ as sectors.  Define their relative distance
\begin{equation}
 \drel(D,C)=\min\{\wt(x):x\in D\setminus C\}.
 \label{eq:relative-distance}
\end{equation}

\begin{theorem}[Nested temporal sectors]
\label{thm:nested}
The nested construction \cref{eq:nested} has the following properties.
\begin{enumerate}[label=(\roman*),leftmargin=2.2em]
\item Its payload is $R$ $q$-ary symbols, i.e. there are $q^R$ sectors.
\item Every set of at most
\begin{equation}
 t=\drel(C^\perp,D^\perp)-1
 \label{eq:nested-depth}
\end{equation}
slots has the same marginal in every sector.  The common marginal is uniform on the projected subspace $P_S(C)=P_S(D)$; it is uniform on all of $\F_q^S$ whenever $P_S(C)=\F_q^S$.
\item The minimum Hamming distance between two distinct sector supports
is $\drel(D,C)$.  Hence the sector is recovered uniquely after any
pattern of $e$ errors and $f$ known erasures satisfying
\begin{equation}
 2e+f\le\drel(D,C)-1.
 \label{eq:nested-errors}
\end{equation}
\item The parameters obey the relative Singleton bound
\begin{equation}
 \boxed{R+t+\drel(D,C)-1\le n},
 \label{eq:relative-singleton}
\end{equation}
and consequently
\begin{equation}
 \boxed{R+t+2e+f\le n}.
 \label{eq:payload-tradeoff}
\end{equation}
\end{enumerate}
\end{theorem}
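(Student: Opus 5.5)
Part (i) is immediate: the cosets $a+C$ with $a\in D$ are indexed by $D/C$, which has $q^{K-k}=q^R$ elements.

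For (ii) we use duality on a fixed slot set $S$. Let $P_S$ be the coordinate projection. The marginal of the uniform law on $a+C$ is uniform on the affine subspace $P_S(a)+P_S(C)$. To see this, note that each fiber of $P_S$ restricted to $C$ is a coset of $\ker P_S|_C$, so all fibers have equal size. It therefore suffices to prove $P_S(C)=P_S(D)$ whenever $|S|\le t$. Once that holds, $P_S(a)\in P_S(D)=P_S(C)$, so every sector has the same marginal, namely the uniform law on $P_S(C)$. If additionally $P_S(C)=\F_q^S$, that marginal is uniform on $\F_q^S$.

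To prove $P_S(C)=P_S(D)$, argue by contradiction. Suppose $P_S(C)\subsetneq P_S(D)$. Then some linear functional $\phi$ on $\F_q^S$ vanishes on $P_S(C)$ but not on $P_S(D)$. Extend $\phi$ by zero outside $S$ to get a vector $y\in\F_q^n$ with $\supp y\subseteq S$. This $y$ is orthogonal to $C$, so $y\in C^\perp$, but it is not orthogonal to $D$, so $y\notin D^\perp$. Hence $\wt(y)\le|S|\le t<\drel(C^\perp,D^\perp)$, contradicting the definition of the relative distance. This is the same argument as \cref{lem:uniformity}, with the nested pair in place of $\{0\}\subset\F_q^S$. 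The main care point is using the functional-extension trick with the right containment, $C^\perp\supseteq D^\perp$.

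For (iii), take points $x\in a+C$ and $x'\in b+C$ from distinct sectors with $a,b\in D$. Their difference $x-x'$ lies in $D\setminus C$, so $\wt(x-x')\ge\drel(D,C)$. Conversely, choose $z\in D\setminus C$ achieving the minimum weight. Then $0\in C$ and $z\in z+C$ lie in distinct sectors at Hamming distance exactly $\drel(D,C)$. Error and erasure correction then follows from the last step of the proof of \cref{thm:nonlinear-singleton}: two sector histories consistent with the same received word differ in at most $2e+f$ positions, which is impossible if $2e+f<\drel(D,C)$.

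For (iv), apply \cref{thm:nonlinear-singleton} with $M=q^R$ and $d_Z=\drel(D,C)$. For the slot set $A$, take any $t$ coordinates; the case $t=0$ is trivial. By (ii) all sector marginals on $A$ coincide. This gives \eqref{eq:relative-singleton}, and substituting $\drel(D,C)\ge2e+f+1$ gives \eqref{eq:payload-tradeoff}.

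The one subtle point is that \cref{thm:nonlinear-singleton} needs $t\le n$ coordinates to exist. If $\drel(C^\perp,D^\perp)-1$ were to exceed $n$, the bound would be vacuous. However, $C\ne D$ forces $D^\perp\subsetneq C^\perp$, so some $y\in C^\perp\setminus D^\perp$ exists and has weight at most $n$. Hence $t\le n-1$, and a valid set $A$ is always available.
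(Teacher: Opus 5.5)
Your proposal is correct. Parts (i)--(iii) follow the paper's proof essentially verbatim: the quotient count, the functional-extension argument producing a word of $C^\perp\setminus D^\perp$ supported in $S$, and the observation that $a-b+c\in D\setminus C$.

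The genuine difference is in (iv). You reduce the relative Singleton bound to \cref{thm:nonlinear-singleton}, taking $M=q^R\ge 2$, $d_Z=\drel(D,C)$ from (iii), and equal marginals on one $t$-set from (ii). The paper instead gives a self-contained linear argument. Since $P_A(D)=P_A(C)$, every class of $D/C$ has a representative vanishing on $A$, and these representatives are chosen linearly. Restricting them to $Q=[n]\setminus(A\cup B)$, with $|B|=d_Z-1$, gives an injective linear map $D/C\to\F_q^Q$, so $R\le |Q|$.

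Underneath, both routes use the same puncturing idea: your histories $x_j$ that agree on $A$ play the role of the paper's cleaned representatives with common value $y=0$. Your route is shorter. It also makes explicit that the linear bound is a corollary of the nonlinear one and uses only the pairwise sector distance and equal marginals on a single $t$-set. The paper's route stays within linear algebra and replaces the counting step by a dimension bound via an explicit linear injection.

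Two small remarks. Calling the case $t=0$ ``trivial'' is loose, since it still requires the ordinary Singleton bound for the sector family; however, \cref{thm:nonlinear-singleton} with $A=\emptyset$ covers it verbatim, so nothing is missing. Your check that $t\le n-1$ because $D^\perp\subsetneq C^\perp$ is a detail the paper takes for granted when it chooses $A$ with $|A|=d_X-1$.
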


\begin{proof}
The quotient $D/C$ has dimension $R$, proving (i).  For a coordinate set $S$, the $S$-marginal of the uniform law on $a+C$ is uniform on the affine space $P_S(a)+P_S(C)$.  These affine spaces agree for every $a\in D$ exactly when $P_S(D)=P_S(C)$.  If equality failed, a linear functional supported on $S$ would annihilate $C$ but not $D$, producing an element of $C^\perp\setminus D^\perp$ of weight at most $|S|$.  This proves (ii).  For distinct $a+C,b+C$ inside $D$,
\begin{equation}
 \min_{c,c'\in C}\wt[(a+c)-(b+c')]
 =\min_{c\in C}\wt(a-b+c).
\end{equation}
The difference lies in $D\setminus C$, and minimizing over all
distinct quotient classes gives \cref{eq:relative-distance}.  The
standard error--erasure distance criterion proves (iii).

For (iv), abbreviate
$d_X=\drel(C^\perp,D^\perp)=t+1$ and
$d_Z=\drel(D,C)$.  Choose a coordinate set $A$ with
$|A|=d_X-1$.  Since $P_A(D)=P_A(C)$, every class in $D/C$ has a
representative that vanishes on $A$.  Choose such representatives
linearly by cleaning a basis of the quotient and extending linearly.
If $|A^c|<d_Z$, any nonzero class would thereby acquire a
representative of weight below $d_Z$, a contradiction.  We may
therefore choose $B\subseteq A^c$ with $|B|=d_Z-1$.  Restrict the
cleaned representative to $Q=[n]\setminus(A\cup B)$.  This defines an
injective linear map $D/C\to\F_q^Q$: if two classes had the same
restriction, the difference of their cleaned representatives would be
a word of $D\setminus C$ supported in $B$, contradicting the
definition of $d_Z$.  Hence
\begin{equation}
 R\le |Q|=n-(d_X-1)-(d_Z-1),
\end{equation}
which is \cref{eq:relative-singleton}.  The error--erasure bound
follows from $2e+f\le d_Z-1$.
\end{proof}

\begin{corollary}[Nested MDS saturation]
\label{cor:nested-mds}
Let $C\subset D$ be nested generalized Reed--Solomon codes of dimensions $k<K$ on the same $n$ evaluation points.  Then
\begin{equation}
 t=k,
 \qquad R=K-k,
 \qquad \drel=n-K+1,
 \label{eq:mds-nested-parameters}
\end{equation}
and equality holds in \cref{eq:relative-singleton}.  Every allocation
of errors and erasures with $2e+f=n-K$ also saturates
\cref{eq:payload-tradeoff}.
\end{corollary}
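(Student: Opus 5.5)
We compute the three parameters of \cref{eq:mds-nested-parameters} directly from the MDS structure of generalized Reed--Solomon codes and then substitute them into \cref{thm:nested}.

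\textbf{Setup.} Write $C=\mathrm{GRS}_k(\bm\alpha,\bm v)\subset D=\mathrm{GRS}_K(\bm\alpha,\bm v)$. These are evaluations of polynomials of degree $<k$ and $<K$, respectively, with a common column multiplier $\bm v$. Nesting is immediate because polynomials of degree $<k$ are also of degree $<K$.

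\textbf{Payload.} Since $\dim D=K$ and $\dim C=k$, we get $R=K-k$.

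\textbf{Relative distance.} Every nonzero word of $D$ has weight at least $n-K+1$, because a nonzero polynomial of degree $<K$ has at most $K-1$ roots. This gives $\drel(D,C)\ge d(D)=n-K+1$. For the reverse inequality we need a word in $D\setminus C$ of weight exactly $n-K+1$. Take $f(x)=\prod_{i=1}^{K-1}(x-\alpha_i)$. It has degree $K-1\ge k$, so its evaluation lies in $D\setminus C$, and it vanishes at exactly $K-1$ points, so its weight is $n-K+1$. Hence $\drel=n-K+1$.

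\textbf{Invisible depth.} The dual of a GRS code is again GRS on the same points, with a different multiplier $\bm v'$: $C^\perp=\mathrm{GRS}_{n-k}(\bm\alpha,\bm v')$ and $D^\perp=\mathrm{GRS}_{n-K}(\bm\alpha,\bm v')$. Applying the previous step to the pair $D^\perp\subset C^\perp$ gives $\drel(C^\perp,D^\perp)=n-(n-k)+1=k+1$. Therefore $t=k$ by \cref{eq:nested-depth}.

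An alternative route avoids citing the dual-GRS fact. The bound $\drel(C^\perp,D^\perp)\ge d(C^\perp)=k+1$ follows because $C^\perp$ is MDS: the dual of an MDS code is MDS, or equivalently, every $k$ columns of a generator matrix of $C$ are independent. Then $t\ge k$, and the relative Singleton bound \cref{eq:relative-singleton} forces $t\le n-R-\drel+1=k$.

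\textbf{Saturation.} Substituting the parameters gives $R+t+\drel-1=(K-k)+k+(n-K)=n$, so equality holds in \cref{eq:relative-singleton}. For the second claim, if $2e+f=n-K=\drel-1$, then \cref{eq:nested-errors} holds, and $R+t+2e+f=n$ saturates \cref{eq:payload-tradeoff}.

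\textbf{Main obstacle.} The only nonroutine point is showing that the relative distances equal, and not merely exceed, the plain minimum distances. This requires exhibiting a minimum-weight word outside the smaller code. It is handled either by the explicit product polynomial above or, more cleanly, by pairing the lower bounds with the upper bound from \cref{eq:relative-singleton}. I would use the explicit polynomial for $\drel(D,C)$ and the Singleton squeeze for $t$.
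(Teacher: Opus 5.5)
Your proposal is correct and is essentially the paper's own argument (see the appendix on nested generalized Reed--Solomon extremizers). A degree-$(K-1)$ polynomial with $K-1$ prescribed roots gives a word of $D\setminus C$ of weight $n-K+1$. The paper excludes this word from $C$ by comparing its weight with $d(C)=n-k+1$ rather than by your degree argument. GRS duality with a common dual multiplier then lets the same argument give $\drel(C^\perp,D^\perp)=k+1$. Your Singleton-squeeze variant is also sound: as you note, the two MDS lower bounds together with \cref{eq:relative-singleton} already force both equalities at once, though this depends on Theorem~\ref{thm:nested}(iv) rather than being a direct computation.
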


\begin{proof}
For nested generalized Reed--Solomon codes, puncturing and duality give $\drel(C^\perp,D^\perp)=k+1$ and $\drel(D,C)=n-K+1$.  Substitution into \Cref{thm:nested} gives \eqref{eq:mds-nested-parameters} and equality in the two tradeoffs.
\end{proof}

For example, nested $[6,2]_7\subset[6,4]_7$ Reed--Solomon codes give
$R=2$, $t=2$, and $\drel=3$.  They produce $49$ temporal sectors,
hide the label from every two-slot tester, and correct either one
arbitrary branch-readout error or two known erasures.  Both
$2+2+2=6$ and $2+2+0+2=6$ saturate
\cref{eq:payload-tradeoff}.  By contrast, the binary parity
construction uses $C=[n,n-1,2]_2$ and $D=\F_2^n$: it has $R=1$,
$t=n-1$, and $\drel=1$, saturating the bound with $e=f=0$.

\begin{figure}[t]
\centering
\includegraphics[width=\columnwidth]{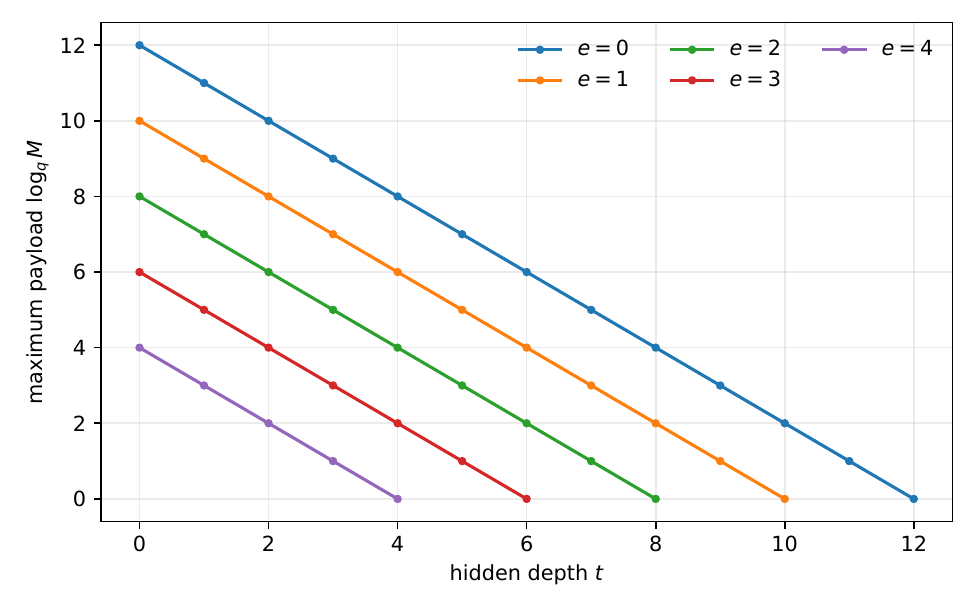}
\caption{The zero-erasure slice of the exact outer region
$\log_qM+t+2e+f\le n$ for $n=12$.  Integer points on the boundary are
attained by suitable nested MDS pairs whenever the required field size
and parity permit.}
\label{fig:tradeoff}
\end{figure}

\section{Exact leakage profile of a nested temporal code}

The zero-leakage depth records only the first point at which a subsystem can
see the quotient label.  The complete leakage curve is the familiar rank
profile of a linear ramp secret-sharing scheme \cite{Kurihara2012,Galindo2018};
here it becomes an operational tester-leakage formula.

\begin{theorem}[Projection-rank leakage formula]
\label{thm:nested-leakage-profile}
Let $C\subset D\subset\mathbb F_q^n$.  Choose $J$ uniformly in $D/C$ and,
conditional on $J=a+C$, choose $X$ uniformly in the coset $a+C$.  For every
coordinate set $S\subseteq[n]$,
\begin{equation}
 \boxed{\qquad
 I(J;X_S)=
 \bigl(\dim P_S(D)-\dim P_S(C)\bigr)\log q.
 \qquad}
 \label{eq:nested-leakage-rank}
\end{equation}
Consequently the operational leakage available to an orbit-resolving tester
on $S$ is an integer multiple of $\log q$.  Define
\begin{equation}
 \begin{aligned}
 d_\ell^{\rm leak}(C,D)
 &=\min\bigl\{|S|:\dim P_S(D)-\dim P_S(C)\geq\ell\bigr\},\\
 &\hspace{8em}1\leq\ell\leq R.
 \end{aligned}
 \label{eq:leakage-weight-hierarchy}
\end{equation}
Then $d_\ell^{\rm leak}$ is the least number of temporal slots needed to
reveal at least $\ell$ $q$-ary symbols of the hidden sector.  In particular,
$d_1^{\rm leak}=d_{\rm rel}(C^\perp,D^\perp)$ and the invisible depth in
Theorem~\ref{thm:nested} is $d_1^{\rm leak}-1$.
\end{theorem}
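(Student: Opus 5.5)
The plan is to compute both entropies in $I(J;X_S)=H(X_S)-H(X_S\mid J)$ directly from the linear structure. The joint law of $(J,X)$ is simply the uniform law on $D$: choosing a uniform class in $D/C$ and then a uniform element of that coset is the same as choosing $X$ uniformly in $D$, with $J=X+C$ a function of $X$. Since $P_S$ is linear, the image of the uniform law on $D$ under $P_S$ is uniform on $P_S(D)$, because every fibre is a coset of $\ker P_S\cap D$ and so all fibres have equal size. This gives $H(X_S)=\dim P_S(D)\log q$. In the same way, conditional on $J=a+C$, the law of $X_S$ is uniform on the affine space $P_S(a)+P_S(C)$, as already used in the proof of \cref{thm:nested}(ii). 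Hence $H(X_S\mid J=a+C)=\dim P_S(C)\log q$ for every class, and subtracting the two entropies yields \cref{eq:nested-leakage-rank}.

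The operational statements then follow from \cref{prop:leakage}: for an orbit-resolving carrier, the maximal tester leakage on $S$ equals $I(J;X_S)$. By the formula just proved, this is an integer multiple of $\log q$. For the hierarchy we read off that a slot set $S$ reveals at least $\ell$ $q$-ary symbols, meaning leakage at least $\ell\log q$, exactly when $\dim P_S(D)-\dim P_S(C)\ge\ell$. Minimizing $|S|$ over such sets is then the definition \cref{eq:leakage-weight-hierarchy}. We note that $d_\ell^{\rm leak}$ is well defined for $\ell\le R$, since $S=[n]$ gives difference $K-k=R$.

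The remaining step, and the main obstacle, is the identity $d_1^{\rm leak}=\drel(C^\perp,D^\perp)$. We want to show that $P_S(D)\ne P_S(C)$ if and only if $C^\perp\setminus D^\perp$ contains a word supported in $S$; minimizing $|S|$ then gives the equality. One direction is already in the proof of \cref{thm:nested}(ii): if the projections differ, a functional on $\F_q^S$ vanishes on $P_S(C)$ but not on $P_S(D)$, and extending it by zero gives such a word of weight at most $|S|$. For the converse, a word $y\in C^\perp\setminus D^\perp$ with $\supp y\subseteq S$ defines a functional on $\F_q^S$ that kills $P_S(C)$ but is nonzero on some element of $P_S(D)$, so the two projections differ. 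Taking $S=\supp y$ for a minimum-weight such $y$ shows $d_1^{\rm leak}\le\drel(C^\perp,D^\perp)$, and the first direction gives the reverse inequality. Finally, the invisible depth $t=\drel(C^\perp,D^\perp)-1$ of \cref{thm:nested} is therefore $d_1^{\rm leak}-1$.
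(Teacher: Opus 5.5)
Your proposal is correct and follows the paper's proof step for step. You compute $H(X_S)=\dim P_S(D)\log q$ from the uniform law on $D$ and $H(X_S\mid J)=\dim P_S(C)\log q$ from uniformity on affine translates of $P_S(C)$, take the operational reading from Proposition~\ref{prop:leakage}, and characterize $P_S(D)\neq P_S(C)$ by a word of $C^\perp\setminus D^\perp$ supported in $S$. The only difference is that you spell out details the paper leaves implicit: equal fibres, both directions of the dual-word equivalence, and well-definedness of $d_\ell^{\rm leak}$ for $\ell\le R$.
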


\begin{proof}
The unconditional history $X$ is uniform on $D$, so its projection is
uniform on the linear space $P_S(D)$ and
$H(X_S)=\dim P_S(D)\log q$.  Conditional on any quotient class, $X_S$ is
uniform on an affine translate of $P_S(C)$, hence
$H(X_S\mid J)=\dim P_S(C)\log q$.  Subtraction proves
\eqref{eq:nested-leakage-rank}.  The operational statement follows from
Proposition~\ref{prop:leakage}.  Finally,
$P_S(D)\neq P_S(C)$ exactly when a linear functional supported in $S$
annihilates $C$ but not $D$, equivalently when
$C^\perp\setminus D^\perp$ contains a nonzero word supported in $S$.
This is the relative-distance characterization used in
Theorem~\ref{thm:nested}.
\end{proof}

\section{Common contamination and a binary depth extremizer}

Let $\nu$ be any common background law and
\begin{equation}
 \widetilde\mu_a=(1-\varepsilon)\mu_a+\varepsilon\nu.
 \label{eq:common-noise}
\end{equation}
All sub-$d^\perp$ marginals remain identical, while for distinct cosets
\begin{equation}
 \frac12\|\mathfrak M_{\widetilde\mu_a}^{(n)}-
 \mathfrak M_{\widetilde\mu_b}^{(n)}\|_{\rm strat}
 =(1-\varepsilon).
 \label{eq:robust-distance}
\end{equation}
The common component cancels, and \cref{thm:isometry} gives the equality.

A minimal depth-extremizing implementation uses $G=\mathbb Z_2$ and its regular qubit representation $U_0=I$, $U_1=X$.  At each time, prepare $|\Phi^+\rangle$ between the process input and a retained qubit.  The two branches produce the orthogonal Bell states $|\Phi^+\rangle$ and $|\Psi^+\rangle$.  Bell measurement returns the history bit $g_j$, and the XOR of all $n$ outcomes reads the even/odd sector.  Every proper subset of outcomes is exactly uniform.

With independent Bell-readout error probability $\eta$, the parity bit is wrong with probability
\begin{equation}
 P_{\rm parity\ error}=\frac{1-(1-2\eta)^n}{2}.
 \label{eq:parity-error}
\end{equation}
This formula describes stochastic readout noise, but it is not an adversarial correction guarantee: all parity cosets are distance one apart.  Robust sector decoding is instead supplied by the nested construction in \cref{thm:nested}.

\begin{figure}[t]
\centering
\includegraphics[width=\columnwidth]{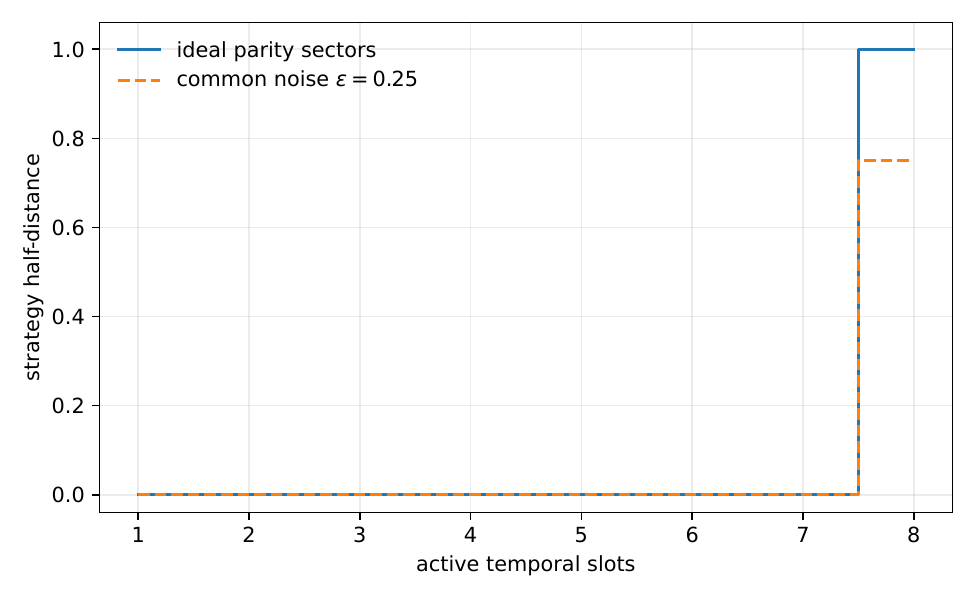}
\caption{Binary parity histories have zero operational distance at every proper depth and unit strategy half-distance at depth $n$.  Common background noise reduces only the terminal value.}
\label{fig:parity}
\end{figure}

\section{Memory cost and process tomography}

A code-coset trajectory has a finite classical dilation.  Sample $z\in\F_q^k$, output $Gz+a$, and retain the generator state needed by a sequential implementation.  The parity family is particularly economical: sample the first $n-1$ symbols locally and keep one running $q$-ary sum to choose the final symbol.  For the parity family, maximal observational depth can therefore coexist with a constant-size persistent environment register plus fresh randomness; a generic linear code may require a larger sequential realization.

The hierarchy exposes a precise failure of reduced process tomography.  For binary parity sectors, even the collection of all $(n-1)$-time marginals does not determine the full process.  This is stronger than the insufficiency of pair correlations.  Conversely, \cref{thm:approx-capacity} quantifies the largest ambiguity compatible with imperfect $t$-time privacy and imperfect global reconstruction.

The result should not be confused with quantum Markov order.  Generator memory dimension, Markov order, tester depth, and hidden-sector capacity are distinct resources.  Our bounds concern the last two.

\section{Exact boundary between coherent and classical histories}
\label{sec:coherent-boundary}

\begin{proposition}[Orthogonal records dephase a coherent history control]
\label{prop:coherent-history-dephasing}
Let $C$ have orthonormal basis $\{|\bm g\rangle:\bm g\in G^n\}$ and initial
state
$\rho_C=\sum_{\bm g,\bm h}c_{\bm g\bm h}|\bm g\rangle\langle\bm h|$.
Condition the $n$ process slots coherently on the history label.  If the
control is traced out, or if an inaccessible environment stores mutually orthogonal record
states $|e_{\bm g}\rangle$ and is traced out, then the reduced process is exactly
\begin{equation}
 \sum_{\bm g}c_{\bm g\bm g}
 \bigotimes_{j=1}^n\operatorname{Ad}_{U_{g_j}}
 =\mathfrak M_\mu^{(n)},
 \qquad \mu(\bm g)=c_{\bm g\bm g}.
 \label{eq:coherent-to-classical-history}
\end{equation}
Thus every theorem of this paper applies without modification to a coherent
implementation once an orthogonal which-history record is discarded.  The
off-diagonal phases are operationally accessible only if the control is
retained and interfered, or if the record states have nonzero overlaps.
\end{proposition}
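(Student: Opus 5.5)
I will write the coherently controlled process as a single channel on the slot interfaces, and then show that tracing out the control, or an orthogonal record, deletes every off-diagonal term. First I fix the model. The controlled unitary acts on $C\otimes\bigotimes_j\mathcal H_j$ as
\[
 W=\sum_{\bm g}|\bm g\rangle\langle\bm g|\otimes U_{\bm g},
 \qquad U_{\bm g}=\bigotimes_j U_{g_j}.
\]
In the record variant, an isometry additionally maps $|\bm g\rangle\mapsto|\bm g\rangle|e_{\bm g}\rangle$ with $\langle e_{\bm h}|e_{\bm g}\rangle=\delta_{\bm g\bm h}$. Because each slot factor of $U_{\bm g}$ acts only on its own fresh input, I will argue on the flattened channel. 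The tensor-product slot structure is already literal there, as in \cref{eq:history-process}. Causal ordering then needs no separate treatment: a branch that is a product of slot unitaries with a fixed label defines a valid comb, and the flattened channel determines the comb.

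The main step is a direct calculation for an arbitrary operator $X$ on the inputs, which may include any tester ancillas:
\[
 \Tr_C\!\left[W(\rho_C\otimes X)W^\dagger\right]
 =\sum_{\bm g,\bm h}c_{\bm g\bm h}\,\langle\bm h|\bm g\rangle\,
 U_{\bm g}XU_{\bm h}^\dagger .
\]
Orthonormality of the control basis gives $\langle\bm h|\bm g\rangle=\delta_{\bm g\bm h}$, so only the diagonal terms survive. This yields $\sum_{\bm g}c_{\bm g\bm g}\operatorname{Ad}_{U_{\bm g}}(X)$. In the record case the same computation produces the factor $\langle e_{\bm h}|e_{\bm g}\rangle$ when the environment is traced. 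This gives the identical result whether or not the control is also traced: if it is kept but ignored, partial trace over the environment still kills the cross terms.

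I then check that $\mu(\bm g)=c_{\bm g\bm g}$ is a probability law. It is nonnegative because $\rho_C\ge0$, and it sums to one because $\Tr\rho_C=1$. Hence the reduced process is exactly $\mathfrak M_\mu^{(n)}$, and the earlier results apply verbatim because they depend only on the process.

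The only delicate point, which I expect to be the main obstacle, is that the identity must hold at the level of the full strategy, with adaptive testers acting between slots. I will handle this by noting that the controlled implementation is itself a comb whose Choi operator is the partial trace of the joint one. Equivalently, the tester's interleaved operations commute with the partial trace over $C$ or the environment, since the tester never touches those systems. So the computation above, applied to the linked tester–process network, gives the same statistics as $\mathfrak M_\mu^{(n)}$.

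For the final remark I give one witness. If the control is kept and the $|e_{\bm g}\rangle$ overlap, the cross factor $\langle e_{\bm h}|e_{\bm g}\rangle c_{\bm g\bm h}$ survives, so the off-diagonal coefficients can affect outcomes. A two-history example with an interference measurement on $C$ exhibits a dependence on the phase of $c_{\bm g\bm h}$.
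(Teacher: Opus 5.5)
Your proposal is correct and follows the paper's proof. In both, the partial trace over the control (or over an orthogonal record) multiplies each cross term $c_{\bm g\bm h}$ by $\langle\bm h|\bm g\rangle$ (or $\langle e_{\bm h}|e_{\bm g}\rangle$), which leaves exactly the diagonal mixture $\mathfrak M_\mu^{(n)}$. Your additional checks, that $c_{\bm g\bm g}$ is a probability law and that the identity holds for adaptive testers, are sound refinements the paper leaves implicit: adaptivity is covered because every controlled slot is block diagonal in the control basis and the comb is fixed by its Choi operator.
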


\begin{proof}
Before tracing, the controlled evolution contains cross terms
$c_{\bm g\bm h}|\bm g\rangle\langle\bm h|$ tensored with the corresponding
left and right history branches.  The partial trace over $C$ multiplies each
cross term by $\langle\bm h|\bm g\rangle=\delta_{\bm g\bm h}$.  An orthogonal
environmental record gives the identical factor
$\langle e_{\bm h}|e_{\bm g}\rangle$.  Only the diagonal probabilities
remain, yielding \eqref{eq:coherent-to-classical-history}.  Nonorthogonal
records or an interferometric measurement of $C$ retain cross terms, so the
classical trajectory coordinate is then incomplete.
\end{proof}

The following is the quantum Singleton argument expressed in temporal-slot language \cite{KnillLaflamme2000,Grassl2022}.

\begin{theorem}[Coherent-history Singleton bound]
\label{thm:coherent-singleton}
Let $V:\mathbb C^K\to(\mathbb C^q)^{\otimes n}$ be an isometric coherent
history encoder with $K>1$.  Suppose that for every set $E$ of at most $t$ temporal
registers there is a state $\omega_E$ such that
\begin{equation}
 \operatorname{Tr}_{E^c}
 V|\psi\rangle\langle\phi|V^\dagger
 =\langle\phi|\psi\rangle\,\omega_E
 \label{eq:phase-sensitive-privacy}
\end{equation}
for all logical vectors $|\psi\rangle,|\phi\rangle$.  Thus every $t$-time
observer is blind not only to classical labels but also to all logical
phases.  Then necessarily $2t<n$, and
\begin{equation}
 \boxed{\log_qK+2t\leq n.}
 \label{eq:quantum-singleton-history}
\end{equation}
Condition \eqref{eq:phase-sensitive-privacy} is equivalent to exact
correctability of erasure of any $t$ temporal registers.  Quantum MDS codes,
when they exist for the chosen parameters, attain equality.  By contrast,
the diagonal classical history families of the preceding sections can carry
$q^{n-t}$ labels; coherent phase privacy therefore costs a second block of
$t$ registers.
\end{theorem}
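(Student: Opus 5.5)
The plan is to prove the equivalence with erasure correction first, and then derive the bound by the standard quantum Singleton argument via subadditivity of von Neumann entropy.

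\emph{Equivalence.} Write $|\Psi\rangle=(V\otimes I_{R})|\Phi\rangle$, where $|\Phi\rangle$ is maximally entangled between $\mathbb C^K$ and a reference $R$. Condition \eqref{eq:phase-sensitive-privacy} for all $|\psi\rangle,|\phi\rangle$ is equivalent to $\Psi_{ER}=\omega_E\otimes I_R/K$. By linearity it suffices to test matrix units, so the condition is exactly that $\operatorname{Tr}_{E^c}(V X V^\dagger)=\Tr(X)\,\omega_E$ for every logical operator $X$. By the decoupling form of the Knill--Laflamme conditions \cite{KnillLaflamme2000}, decoupling of $E$ from the reference is equivalent to recoverability of the logical state from $E^c$, i.e.\ correctability of erasure of $E$. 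I will cite this standard equivalence rather than re-derive it.

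\emph{Necessity of $2t<n$.} Suppose $2t\ge n$. Partition the registers into $E_1,E_2$ with $|E_1|,|E_2|\le t$. Both $E_1$ and $E_2$ then satisfy the privacy condition, so each is decoupled from $R$. Erasing $E_1$ is correctable, so $R$ can be recovered from $E_2$ alone. But $E_2$ is also decoupled from $R$, so $I(R;E_2)=0$ while recoverability forces $I(R;E_2)=2\log K>0$. This contradicts $K>1$.

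\emph{The bound.} Now take $n>2t$ and split the registers into disjoint sets $A,B,C$ with $|A|=|B|=t$ and $|C|=n-2t$. The global state on $ABCR$ is pure. Decoupling of $A$ gives $S(AR)=S(A)+S(R)$, and likewise $S(BR)=S(B)+S(R)$, with $S(R)=\log K$. Purity gives $S(AR)=S(BC)$ and $S(BR)=S(AC)$. Subadditivity then yields
\[
S(A)+S(R)=S(BC)\le S(B)+S(C),
\qquad
S(B)+S(R)=S(AC)\le S(A)+S(C).
\]
Adding the two inequalities gives $2S(R)\le 2S(C)\le 2(n-2t)\log q$, hence $\log_q K\le n-2t$, which is \eqref{eq:quantum-singleton-history}. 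The same entropy argument also reproves the case $2t\ge n$, taking $C=\emptyset$: it gives $\log K\le 0$, contradicting $K>1$.

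\emph{Attainment and comparison.} A quantum MDS code $[[n,k,d]]_q$ with $d=t+1$ corrects $t$ erasures and satisfies $k=n-2(d-1)=n-2t$. It therefore attains equality by the Equivalence step. The contrast with the $q^{n-t}$ classical labels is \cref{thm:capacity}.

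The main care point is the Equivalence step: operator-level privacy for all pairs $|\psi\rangle,|\phi\rangle$ must be identified with reference decoupling. The entropy inequalities themselves are routine.
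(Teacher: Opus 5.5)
Your proposal is correct and follows essentially the same route as the paper. You identify the privacy condition with reference decoupling, i.e.\ the Knill--Laflamme erasure condition. You then purify the maximally mixed logical state and apply purity plus subadditivity on a partition $A,B,C$ with $|A|=|B|=t$, obtaining $\log K=S(R)\le S(C)\le (n-2t)\log q$. The only difference is cosmetic: you exclude $2t\ge n$ through the contradiction between $I(R;E_2)=0$ and $I(R;E_2)=2\log K$ (or through the $C=\emptyset$ case of the entropy argument), whereas the paper appeals to no-cloning.
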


\begin{proof}
Equation~\eqref{eq:phase-sensitive-privacy} is the Knill--Laflamme condition
for erasure of $E$: all operators supported on $E$ act as scalars after
compression to the code.  If $2t\geq n$, two disjoint surviving subsystems
of size at most $n-t$ would each support exact recovery, contradicting
no-cloning for $K>1$.  Hence $2t<n$.  Purify the maximally mixed logical
state with a reference $R$, and partition the physical registers into
disjoint sets $A,B,C$ with $|A|=|B|=t$.  Privacy gives
$I(R:A)=I(R:B)=0$.  Since the joint state on $RABC$ is pure,
\[
 S(BC)=S(R)+S(A),\qquad S(AC)=S(R)+S(B).
\]
Subadditivity for the two left sides yields
$2S(R)\leq2S(C)$.  Hence
\[
 \log K=S(R)\leq S(C)\leq(n-2t)\log q,
\]
which is \eqref{eq:quantum-singleton-history}.
\end{proof}

\section{Limitations and extensions}

The exact total-variation and convolution theorems concern classical mixtures of unitary histories in the fresh-probe slot architecture defined above.  Theorem~\ref{thm:coherent-singleton} covers coherent code subspaces under phase-sensitive privacy; arbitrary quantum-controlled histories and indefinite causal order remain outside the present classification.  Any reduction of broader dynamics to stochastic trajectories discards coherence whose operational significance must be analyzed separately.

Exact orbit resolution requires complete irreducible support.  Incomplete carriers compress or identify trajectories, replacing total variation by a smaller channel norm.  Approximate designs and continuous groups require energy or bandwidth constraints; no finite-dimensional carrier supports an exactly orthogonal continuum of branches.

The causal degradation theorem is intentionally output-side.  Superprocesses that modify future input interfaces, merge time steps, change the process arity, or coherently propagate one probe through several branches form a larger resource theory and can evade the convolution description.

\section{Exact coherent-hiding criterion}
\label{sec:coherent-hiding-criterion}

Classical marginal privacy and coherent quantum privacy are different
requirements.  The precise boundary is the erasure-correction condition.

\begin{theorem}[Coherent temporal hiding equals erasure correction]
\label{thm:coherent-hiding-erasure}
Let $\mathcal Q\subseteq\bigotimes_{i=1}^n\mathcal H_i$ be a code subspace
with projector $P$, and let $S$ be a set of temporal slots.  The following are
equivalent:
\begin{enumerate}[label=(\roman*)]
\item the reduced state on $S$ is the same for every density operator
supported on $\mathcal Q$;
\item for every operator $A_S$ supported on $S$ there is a scalar
$c(A_S)$ such that
\[
 P(A_S\otimes I_{S^c})P=c(A_S)P;
\]
\item erasure of the slots $S$ is exactly correctable on $\mathcal Q$.
\end{enumerate}
Hence a family hides arbitrary coherent logical superpositions from every
set of at most $t$ slots if and only if it is a quantum code correcting every
$t$-slot erasure.
\end{theorem}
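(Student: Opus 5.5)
The plan is to prove the cycle (i)$\Leftrightarrow$(ii) by Hilbert--Schmidt duality and (ii)$\Leftrightarrow$(iii) by the Knill--Laflamme theorem, then derive the closing ``hence'' statement by quantifying over all sets $S$ with $|S|\le t$.

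\emph{(i)$\Leftrightarrow$(ii).} For any density operator $\rho=P\rho P$ and any $A_S$ we have
\[
 \Tr\bigl[\Tr_{S^c}(\rho)A_S\bigr]=\Tr\bigl[\rho\,P(A_S\otimes I_{S^c})P\bigr].
\]
If (ii) holds, the right side equals $c(A_S)$ for every such $\rho$. Since the operators $A_S$ span the operators on $S$, the reduced state is the same for all $\rho$, which is (i).

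Conversely, assume (i). Then the compression $X=P(A_S\otimes I)P$, viewed as an operator on $\mathcal Q$, has $\Tr(\rho X)$ independent of the density operator $\rho$ on $\mathcal Q$. Decompose $A_S$ into Hermitian parts, so that $X$ may be taken Hermitian on $\mathcal Q$. A Hermitian operator on $\mathcal Q$ whose expectation is constant over all pure states in $\mathcal Q$ must be a multiple of $P$. This gives (ii), with $c(A_S)=\Tr(\omega_SA_S)$, where $\omega_S$ is the common reduced state. Equivalently, polarization shows that (i) forces $\langle\psi|X|\phi\rangle=\langle\psi|\phi\rangle c$.

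\emph{(ii)$\Leftrightarrow$(iii).} Erasure of $S$ is the channel $\rho\mapsto\Tr_S\rho\otimes|\mathrm{e}\rangle\langle\mathrm{e}|_S$ (or any channel replacing $S$ by a fixed state). One Kraus representation is $\{|\mathrm{e}\rangle\langle k|_S\otimes I_{S^c}\}$. The Knill--Laflamme condition for this Kraus set reads
\[
 P\bigl(|k\rangle\langle l|_S\otimes I\bigr)P=\alpha_{kl}P.
\]
Since the matrix units $|k\rangle\langle l|$ span all operators on $S$, this condition is exactly (ii) by linearity. Invoking the Knill--Laflamme theorem (necessity and sufficiency) closes the equivalence.

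Alternatively, one can avoid citing Knill--Laflamme. Use the complementarity principle: the erasure channel's complementary channel is $\rho\mapsto\Tr_{S^c}\rho$ (up to an isometry). Correctability is equivalent to this complementary channel being constant on $\mathcal Q$, which is (i). This uses the standard Kretschmann--Werner/decoupling characterization of exact correctability, and it is the step where care is needed. The subtle point is that exact recovery requires the complementary output to be independent of the \emph{operator} $|\psi\rangle\langle\phi|$ up to the factor $\langle\phi|\psi\rangle$, not merely of states. By linearity and polarization this follows from (i) applied to density operators, which is why I would route through (ii).

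The final sentence follows by applying the equivalence to each $S$ with $|S|\le t$. Hiding from all such slot sets means (i) holds for every such $S$, which by (iii) is precisely correctability of every $t$-slot erasure. This also matches condition \eqref{eq:phase-sensitive-privacy} used in Theorem~\ref{thm:coherent-singleton}.

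The main obstacle I anticipate is only notational: carefully showing that constancy on density operators implies the off-diagonal statement. This is handled by the Hermitian-decomposition argument above.
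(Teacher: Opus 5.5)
Your proposal is correct and follows essentially the same route as the paper. You get (i)$\Leftrightarrow$(ii) by taking expectations and using polarization (your Hermitian-decomposition step just spells out the paper's polarization remark), and (ii)$\Leftrightarrow$(iii) from the Knill--Laflamme condition; you make the latter more explicit than the paper does by exhibiting the Kraus set $\{|\mathrm{e}\rangle\langle k|_S\otimes I_{S^c}\}$, whose products $E_k^\dagger E_l$ are exactly the matrix units on $S$.
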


\begin{proof}
If all reduced states on $S$ coincide, the expectation of every $A_S$ is
constant on unit vectors in $\mathcal Q$.  Polarization then gives
$P(A_S\otimes I)P=c(A_S)P$.  The converse follows by taking expectations.
Condition (ii) is the Knill--Laflamme condition \cite{KnillLaflamme2000} for the operator system spanned by errors supported on the erased slots, and is equivalent to exact recovery of that erasure.
\end{proof}

\begin{corollary}[Boundary of the classical code-coset construction]
Classical $t$-wise independent trajectory labels guarantee only equality of
diagonal temporal marginals.  They extend to coherent $t$-slot hiding exactly
when their coherent span satisfies Theorem~\ref{thm:coherent-hiding-erasure};
equivalently, the associated quantum code has distance at least $t+1$.
\end{corollary}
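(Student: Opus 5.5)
The corollary has two halves. The first says that classical $t$-wise independence only controls diagonal marginals. The second identifies exactly when it extends to coherent hiding. I would prove them separately, getting the second almost directly from Theorem~\ref{thm:coherent-hiding-erasure}.

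For the first half, take labels $J$ with histories $X\sim\mu_J$, embedded as computational-basis mixtures $\rho_J=\sum_{\bm x}\mu_J(\bm x)|\bm x\rangle\langle\bm x|$. The reduced state on a slot set $S$ is then the diagonal operator $\sum_{x_S}(\mu_J)_S(x_S)|x_S\rangle\langle x_S|$. So equality of $S$-marginals, supplied by Lemma~\ref{lem:uniformity} for $|S|<d^\perp$ or by Theorem~\ref{thm:nested}(ii), is exactly equality of these diagonal reduced states, and nothing more follows.

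To show the guarantee really is only diagonal, I would exhibit the parity example. Take $C_{\rm par}$ with $q=2$ and $n=2$, and superpose two histories from different sectors, $|00\rangle$ and $|11\rangle$ in the even sector versus $|01\rangle$, $|10\rangle$ in the odd one. Equivalently, compare the logical states $(|00\rangle\pm|11\rangle)/\sqrt2$. Their one-slot marginals coincide, yet phase-sensitive operators such as $X\otimes X$ separate them. More importantly, the span of two disjoint sectors fails (ii): take $A_S=|0\rangle\langle0|$ on one slot with $|00\rangle$ versus $|01\rangle$ in the even/odd pair. This shows that equal diagonal marginals of the basis labels do not imply equal reduced states for all superpositions in the span.

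For the second half, let $\mathcal Q$ be the coherent span of the chosen history states, with projector $P$. Applying Theorem~\ref{thm:coherent-hiding-erasure} to each $S$ with $|S|\le t$ gives the equivalence: coherent $t$-slot hiding holds iff erasure of every such $S$ is correctable. By the standard Knill--Laflamme characterization, correcting all erasures of at most $t$ slots is the same as having distance at least $t+1$. Concretely, every operator supported on $t$ slots acts as a scalar on $\mathcal Q$, which is the definition of distance $\ge t+1$ used for erasures in \cite{KnillLaflamme2000}.

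The main obstacle is fixing what ``coherent span'' means precisely, since basis-vector superpositions versus uniform coset states give different codes. I would adopt coset states $|a+C\rangle\propto\sum_{c\in C}|a+c\rangle$, as in CSS codes, and note that the equivalence is stated for whatever span is chosen. The logic then reduces to applying the theorem slotwise, so no further computation is needed.
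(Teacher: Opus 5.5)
Your second half, and your observation that equal $S$-marginals are exactly equal diagonal reduced states, match the paper's proof essentially verbatim. The paper's proof is just those two sentences and gives no example. The step that fails is the parity example you add to justify the word ``only''.

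The pair $(|00\rangle\pm|11\rangle)/\sqrt2$ consists of superpositions inside the \emph{even} sector, not across sectors as your sentence says. The operator that separates them, $X\otimes X$, acts on both slots, i.e.\ at full depth $n=2$. So it says nothing about one-slot hiding.

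The pair $|00\rangle,|01\rangle$ with $A_S=|0\rangle\langle0|$ on slot $2$ violates (ii) for a purely classical reason. Single histories have point-mass marginals $\delta_0\neq\delta_1$, and $t$-wise independence is a property of the sector laws $\mu_a$, never of individual histories. Your sentence ``equal diagonal marginals of the basis labels do not imply\ldots'' therefore has no force here, because those diagonal marginals are not equal. Moreover, neither vector lies in the coset-state code you adopt in your last paragraph.

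A correct witness uses exactly that convention. Take $C_{\rm par}$ with $q=n=2$. The sector states $|0+C\rangle=|\Phi^+\rangle$ and $|1+C\rangle=|\Psi^+\rangle$ both have one-slot reduced state $I/2$. Yet $X_1|\Psi^+\rangle=|\Phi^+\rangle$, so $PX_1P$ is not a scalar. Equivalently, the superposition $(|\Phi^+\rangle+|\Psi^+\rangle)/\sqrt2=|{+}{+}\rangle$ has slot-one marginal $|+\rangle\langle+|$. The labels are thus hidden at depth $n-1=1$ classically but not coherently.

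The mechanism is general:
\begin{itemize}
\item For all cosets of a proper $C$, the single-slot shift $|\bm x\rangle\mapsto|\bm x+e_i\rangle$ with $e_i\notin C$ (Appendix~\ref{app:coset-obstruction}) permutes the coset states nontrivially. Coherent hiding therefore already fails at depth one.
\item For a nested pair $C\subset D$, the coset-state span is the CSS code of distance $\min\{\drel(D,C),\drel(C^\perp,D^\perp)\}$. The classical depth $t=\drel(C^\perp,D^\perp)-1$ therefore extends coherently if and only if additionally $\drel(D,C)\ge t+1$.
\end{itemize}
With this example substituted, or with the example dropped as in the paper, your argument is complete.
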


\begin{proof}
Equality of diagonal marginals controls only diagonal observables.  By \Cref{thm:coherent-hiding-erasure}, equality for all coherent superpositions is equivalent to the Knill--Laflamme condition for every $t$-slot erasure, which is precisely distance at least $t+1$.
\end{proof}

\section*{orbit-frame error decomposition}

Imperfect orbit resolution and coherent process perturbations can be
separated by combining frame conditioning with the strategy norm.

\begin{theorem}[Resolved-frame hiding stability]
\label{thm:third-hiding-frame}
Let \(\mathcal P\) be an ideal trajectory-mixture process and
\(\widetilde{\mathcal P}\) a coherent implementation with
\[
 \|\widetilde{\mathcal P}-\mathcal P\|_{\rm strat}\le\varepsilon.
\]
Suppose orbit reconstruction uses a frame operator \(S\) with supported
lower bound \(\sigma>0\), and its implementation perturbation \(E\)
satisfies \(\|E\|<\sigma\).  If the remaining synthesis map has norm at
most \(C\), define
\[
 \delta_{\rm fr}
 =
 \frac{C\|E\|}{\sigma(\sigma-\|E\|)}.
\]
Then every adaptive tester sees total deviation at most
\[
 \varepsilon+\delta_{\rm fr}.
\]
Consequently an ideal hiding margin \(m\) remains positive whenever
\(\varepsilon+\delta_{\rm fr}<m\).
\end{theorem}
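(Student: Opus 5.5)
The plan is a triangle inequality in the strategy norm, with two error sources handled separately and then added. Fix an adaptive tester $T$ and write its observed statistics as the composition of the tester with the process, followed by the orbit-reconstruction/synthesis map that turns branch records into frame estimates. The ideal pipeline is $\mathcal R\circ T[\mathcal P]$ with $\mathcal R=\mathcal C\circ S^{-1}$, where $S$ is the frame operator and $\mathcal C$ the synthesis map with $\|\mathcal C\|\le C$. The implemented pipeline is $\widetilde{\mathcal R}\circ T[\widetilde{\mathcal P}]$ with $\widetilde{\mathcal R}=\mathcal C\circ(S+E)^{-1}$. Inserting the mixed term $\widetilde{\mathcal R}\circ T[\mathcal P]$ (or equivalently $\mathcal R\circ T[\widetilde{\mathcal P}]$) splits the deviation into a process term and a frame term.

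\emph{Process term.} The first piece is $\widetilde{\mathcal R}\circ T[\widetilde{\mathcal P}-\mathcal P]$. Since $T$ is a normalized tester, it is controlled by $\|\widetilde{\mathcal P}-\mathcal P\|_{\rm strat}\le\varepsilon$, by the definition of the strategy norm, exactly as in the upper bound of \cref{thm:isometry}. To keep the constant at $\varepsilon$ rather than $C\varepsilon/(\sigma-\|E\|)$, I would interpret the reconstruction as classical post-processing of the measured outcome, which is contractive. This convention is the one place the statement leaves implicit, and I would fix it explicitly.

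\emph{Frame term.} The second piece is $(\widetilde{\mathcal R}-\mathcal R)\circ T[\mathcal P]$. On the supported subspace, $S\ge\sigma$, so $\|S^{-1}\|\le1/\sigma$. Since $\|E\|<\sigma$, a Neumann series shows that $S+E$ is invertible on the same support, with $\|(S+E)^{-1}\|\le 1/(\sigma-\|E\|)$. The resolvent identity
\[
(S+E)^{-1}-S^{-1}=-(S+E)^{-1}E\,S^{-1}
\]
then gives
\[
\|(S+E)^{-1}-S^{-1}\|\le \frac{\|E\|}{\sigma(\sigma-\|E\|)}.
\]
Composing with the synthesis map of norm at most $C$, and using that $T[\mathcal P]$ is a normalized output, bounds this piece by $\delta_{\rm fr}$.

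\emph{Conclusion.} Adding the two pieces gives a total deviation of at most $\varepsilon+\delta_{\rm fr}$, uniformly over adaptive testers. For the hiding-margin consequence, suppose the ideal processes satisfy a separation (or indistinguishability) margin $m$. By the reverse triangle inequality, the implemented quantity then lies within $\varepsilon+\delta_{\rm fr}$ of the ideal one, so it stays positive when $\varepsilon+\delta_{\rm fr}<m$.

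The main obstacle is the support issue. The frame operator is only bounded below on its support, so the perturbation $E$ must not leak outside that support, or else $S+E$ must be inverted as a pseudo-inverse. I would first try assuming that $E$ preserves $\operatorname{supp}S$, in which case the Neumann argument runs verbatim. Failing that, I would compress $E$ to the support and absorb the off-support part into the $\varepsilon$ term.
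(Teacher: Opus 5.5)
Your proposal is correct and is the paper's two-line proof made explicit: a triangle inequality separates the coherent-process error, bounded by $\varepsilon$ because the strategy norm already optimizes over adaptive testers, from the frame-reconstruction error on the ideal process, and your Neumann-series bound $\|(S+E)^{-1}\|\le 1/(\sigma-\|E\|)$ together with the resolvent identity is precisely the ``frame inverse perturbation bound'' that the paper invokes to obtain $\delta_{\rm fr}$. The two conventions you flag are also implicit in the paper: that reconstruction acts as contractive post-processing, so the process term is not amplified, and that $E$ preserves $\supp S$. Stating them is a gain in rigor, but your fallback of absorbing an off-support part of $E$ into $\varepsilon$ is not justified and is best dropped.
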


\begin{proof}
The frame inverse perturbation bound controls the reconstructed ideal
process by \(\delta_{\rm fr}\).  Add the coherent process error and use
the operational definition of the strategy norm, which already optimizes
over adaptive testers.
\end{proof}

This identifies two different phase boundaries.  The process may leave
the classical trajectory neighbourhood as \(\varepsilon\) grows, or the
orbit frame may become singular as \(\sigma\downarrow0\).  A claimed
coherent advantage must remain after both errors have been charged; it
cannot be attributed to ill-conditioned orbit reconstruction.

\section*{an exact coherent-history separation}

A coherent process sector outside the classical trajectory polytope can
already be realized by a replacer comb.

\begin{theorem}[Half-unit separation from every history mixture]
\label{thm:frontier-hiding-coherent}
Let \(h_0,h_1\) be two orthogonal history records and let a process ignore
its inputs and output
\[
 |+\rangle=\frac{|h_0\rangle+|h_1\rangle}{\sqrt2}.
\]
For every classical trajectory-mixture process whose record state is
diagonal in the history basis,
\[
 \frac12\|\mathfrak P_+-\mathfrak P_{\rm mix}\|_{\rm strat}
 \ge\frac12.
\]
Equality is attained by the equal mixture of \(h_0\) and \(h_1\).
\end{theorem}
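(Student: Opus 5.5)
Both processes are replacer combs: they ignore every input and output a fixed record state. My plan is to reduce the strategy-norm distance to a trace distance between these two output states, and then optimize over diagonal states.

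Concretely, $\mathfrak P_+$ outputs $|+\rangle\langle+|$. Any classical trajectory-mixture process whose record is diagonal in the history basis outputs some $\sigma=\sum_{h}p(h)|h\rangle\langle h|$, where $p$ is a probability vector on the history records (possibly supported beyond $\{h_0,h_1\}$). If the mixture process also acts on inputs nontrivially, I will discard those outputs, or note that a tester restricted to the record register already gives a lower bound. Either way, a nonadaptive tester that simply measures the record register achieves $\frac12\|\,|+\rangle\langle+|-\sigma\|_1$, so
\[
 \tfrac12\|\mathfrak P_+-\mathfrak P_{\rm mix}\|_{\rm strat}\ge \tfrac12\|\,|+\rangle\langle+|-\sigma\|_1 .
\]
For the equality case both processes are pure replacers. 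There I will also use the upper bound: any tester applied to a replacer comb yields a state that is a fixed channel applied to the replaced output. Contractivity then shows the strategy distance equals the trace distance of the outputs.

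The core step is to bound this trace distance below over diagonal $\sigma$. I will use the fidelity inequality
\[
 \tfrac12\|\,|\psi\rangle\langle\psi|-\sigma\|_1\ge 1-\langle\psi|\sigma|\psi\rangle .
\]
Here $\langle+|\sigma|+\rangle=\tfrac12(p(h_0)+p(h_1))\le\tfrac12$, which gives the lower bound $\tfrac12$. An alternative route is to pick the test projector $|+\rangle\langle+|$, which gives the same value $1-\langle+|\sigma|+\rangle$. For equality I take $p=(\tfrac12,\tfrac12)$ on $\{h_0,h_1\}$. Then $|+\rangle\langle+|-\tfrac12(|h_0\rangle\langle h_0|+|h_1\rangle\langle h_1|)=\tfrac12(|+\rangle\langle+|-|-\rangle\langle-|)$, whose trace norm is $1$, so the half-distance is exactly $\tfrac12$.

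I expect the main obstacle to be interpretive rather than computational: fixing precisely which processes count as classical trajectory mixtures so that the reduction to the record output is legitimate. I plan to handle this by requiring only that the record marginal be diagonal, and using just the record-measuring tester for the lower bound. This means the inequality holds regardless of how the mixture process treats its inputs. The upper bound needed for equality then only concerns the specific replacer mixture, where the comb-contractivity argument is clean.
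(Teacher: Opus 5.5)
Your proof is correct and follows the paper's route. You reduce the strategy distance of replacer combs to the trace distance of their output states, justifying both directions where the paper only asserts this. You then lower-bound it by a single test and settle equality by diagonalizing $|+\rangle\langle+|-\tfrac12\bigl(|h_0\rangle\langle h_0|+|h_1\rangle\langle h_1|\bigr)$, whose eigenvalues are $\pm\tfrac12$. The only difference is the choice of test: the paper uses the off-diagonal witness $|h_0\rangle\langle h_1|+|h_1\rangle\langle h_0|$, which has expectation $1$ on $|+\rangle$ and $0$ on every diagonal state. You use the projector $|+\rangle\langle+|$, which gives the slightly sharper bound $1-\tfrac12\bigl(p(h_0)+p(h_1)\bigr)$.
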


\begin{proof}
For replacer combs the strategy norm is the trace norm of the output-state
difference.  Dephasing in the two-history subspace fixes every classical
mixture and maps \(|+\rangle\langle+|\) to the equal diagonal mixture.
The off-diagonal witness
\(|h_0\rangle\langle h_1|+|h_1\rangle\langle h_0|\) has expectation one
on \(|+\rangle\) and zero on every diagonal state, giving trace distance
at least \(1/2\).  Direct diagonalization against the equal mixture gives
eigenvalues \(\pm1/2\), so equality holds.
\end{proof}

This constructs the coherent sector demanded by the claim boundary and
proves it cannot be explained by unresolved classical histories.  Coding
rates and adaptive strong converses remain separate asymptotic questions.

\section*{Exact separation for an arbitrary history alphabet}

The half-unit binary gap is the first member of a closed \(n\)-history
law.

\begin{theorem}[Uniform coherent-history distance]
\label{thm:final-hiding-nhistory}
Let \(h_1,\ldots,h_n\) be orthogonal history records and let a replacer
comb output
\[
 |u_n\rangle=\frac1{\sqrt n}\sum_{j=1}^n|h_j\rangle .
\]
Then its minimum strategy distance from every classical history-mixture
replacer is
\[
 \min_{\substack{q_j\ge0\\\sum_jq_j=1}}
 \frac12\left\|
 |u_n\rangle\langle u_n|-\sum_jq_j|h_j\rangle\langle h_j|
 \right\|_1
 =1-\frac1n.
\]
The unique permutation-symmetric minimizer is the uniform classical
mixture.
\end{theorem}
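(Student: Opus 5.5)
As for the half-unit binary case, the comb is a replacer, so its strategy norm is the trace norm of the output-state difference. The statement therefore reduces to a finite-dimensional optimisation: minimise over probability vectors $q$ the trace distance between the pure state $|u_n\rangle\langle u_n|$ and the diagonal state $\sigma_q=\sum_jq_j|h_j\rangle\langle h_j|$. The plan is to prove the lower bound $1-1/n$ by a fidelity argument and then show the uniform mixture attains it by direct diagonalisation.

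\emph{Lower bound.} For a pure state $\psi$ and any state $\sigma$, the Fuchs--van de Graaf inequality gives
\[
 \tfrac12\|\psi-\sigma\|_1\ge 1-\langle\psi|\sigma|\psi\rangle .
\]
This can also be seen directly by testing the projector $|\psi\rangle\langle\psi|$, which already suffices. Here
\[
 \langle u_n|\sigma_q|u_n\rangle=\frac1n\sum_jq_j=\frac1n
\]
for every probability vector $q$. Hence every classical mixture is at trace distance at least $1-1/n$. This step needs no case analysis, and it explains why the bound is uniform in $q$.

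\emph{Attainment.} I would work in the $n$-dimensional span of the records with $\sigma=I/n$. Then $|u_n\rangle\langle u_n|-I/n$ has eigenvalue $1-1/n$ on $|u_n\rangle$ and $-1/n$ on its $(n-1)$-dimensional orthogonal complement. The trace norm is therefore
\[
 (1-\tfrac1n)+\tfrac{n-1}n=2(1-\tfrac1n),
\]
and the half-distance is exactly $1-1/n$. This gives the stated minimum; the case $n=2$ recovers Theorem~\ref{thm:frontier-hiding-coherent}.

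\emph{Uniqueness.} The permutation-symmetric claim is immediate: a diagonal mixture invariant under all permutations of the records must have all $q_j$ equal, and the uniform mixture is a minimiser by the previous step. The main obstacle would be a stronger claim that the minimiser is unique among all $q$. For that I would compute the spectrum of $\psi-\sigma_q$: it has exactly one positive eigenvalue $\lambda_+$, since $\psi$ has rank one and $\sigma_q\ge0$. Because the difference is traceless, the half-norm equals $\lambda_+$. The equation $\det(\lambda+\sigma_q-\psi)=0$ gives the secular equation
\[
 \frac1n\sum_j\frac{1}{\lambda+q_j}=1 .
\]
For $\lambda=1-1/n$, convexity of $x\mapsto 1/(\lambda+x)$ and Jensen's inequality give
\[
 \frac1n\sum_j\frac1{\lambda+q_j}\ge\frac1{\lambda+1/n}=1,
\]
with equality only at uniform $q$. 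Since the left side is decreasing in $\lambda$, we get $\lambda_+\ge1-1/n$, with equality iff $q$ is uniform. This gives full uniqueness as a bonus, but the theorem only requires the symmetric statement.
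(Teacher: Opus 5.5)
Your proof is correct, but it takes a different route from the paper's.

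\textbf{The paper's route.} It argues by symmetry. Because $|u_n\rangle$ is invariant under permutations of the record basis, the distance to every permuted copy of a diagonal competitor equals the distance to the competitor itself. Convexity of the trace norm then shows that the permutation average $I_n/n$ is at least as close. Diagonalizing $|u_n\rangle\langle u_n|-I_n/n$ gives the value $1-1/n$.

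\textbf{Your route.} You get the lower bound from a single witness, the projector $|u_n\rangle\langle u_n|$, using $\langle u_n|\sigma_q|u_n\rangle=1/n$ for every $q$. You then match it with the same diagonalization.

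\textbf{Comparison.} The two arguments are the two halves of one saddle point. The paper identifies the optimal classical competitor. You identify a measurement that is minimax optimal against all classical mixtures at once, which also makes it transparent why the bound does not depend on $q$.

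Your secular-equation step gives something the paper's argument cannot. The trace norm is not strictly convex, so twirling only shows that $I_n/n$ is \emph{a} minimizer; this is why the paper restricts its uniqueness claim to permutation-symmetric minimizers. Your Jensen argument shows that the uniform mixture is the unique minimizer over all probability vectors $q$. The derivation is sound:
\begin{itemize}
\item Weyl's inequality gives at most one positive eigenvalue, and tracelessness gives half-norm $\lambda_+$.
\item For $\lambda>0$ the matrix $\sigma_q+\lambda I$ is invertible even when some $q_j=0$, so the matrix determinant lemma applies.
\item All components of $|u_n\rangle$ are nonzero, so no spurious roots appear.
\end{itemize}

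\textbf{A point of attribution.} The textbook Fuchs--van de Graaf inequality gives only $1-\sqrt{\langle\psi|\sigma|\psi\rangle}$, which here is $1-1/\sqrt n$ and too weak. The bound $1-\langle\psi|\sigma|\psi\rangle$ is the pure-state sharpening, and it is proved by exactly your projector test. Rest the lower bound on that test rather than on the citation.
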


\begin{proof}
The coherent state is invariant under permutations of the history basis.
Average an arbitrary diagonal competitor over all permutations.  Convexity
and unitary invariance of trace distance show that this cannot increase
the distance and produces \(I_n/n\).  The difference
\(|u_n\rangle\langle u_n|-I_n/n\) has eigenvalue \(1-1/n\) on
\(|u_n\rangle\) and eigenvalue \(-1/n\) with multiplicity \(n-1\).
Half its trace norm is \(1-1/n\).
\end{proof}

Coherent temporal information therefore becomes asymptotically unit
separated from every unresolved classical trajectory model as the history
alphabet grows.

\begin{corollary}[Exact adaptive strong converse for repeated coherent sectors]
\label{cor:final-hiding-strong-converse}
For \(m\) independent uses of the uniform \(n\)-history coherent replacer,
allow the competing classical model to use an arbitrary correlated
probability law on the \(n^m\) product histories.  Then
\[
 \min_{\mathfrak P_{\rm cl}}
 \frac12\left\|
   \mathfrak P_{u_n}^{\otimes m}-\mathfrak P_{\rm cl}
 \right\|_{\rm strat}
 =1-n^{-m}.
\]
Consequently the optimal equal-prior discrimination error between the
coherent process and its closest classical-history process is
\[
 p_{\rm err}^{(m)}=\frac{1}{2n^m}.
\]
The statement already optimizes over adaptive process testers.
\end{corollary}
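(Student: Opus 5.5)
The plan is to reduce the $m$-use statement to a single replacer comb on the enlarged alphabet of $N=n^m$ product histories and then invoke \cref{thm:final-hiding-nhistory} with $n$ replaced by $N$. First, $m$ independent uses of a replacer comb form a replacer comb whose inputs are ignored and whose output is the product state
\[
 |u_n\rangle^{\otimes m}=\frac{1}{\sqrt{n^m}}\sum_{\bm j\in[n]^m}|h_{j_1}\rangle\otimes\cdots\otimes|h_{j_m}\rangle=|u_{N}\rangle,
\]
the uniform superposition over the $N$ orthonormal product records $|h_{\bm j}\rangle$. A competing classical model with an arbitrary correlated law $q$ on $[n]^m$ is likewise a replacer, outputting $\sum_{\bm j}q(\bm j)|h_{\bm j}\rangle\langle h_{\bm j}|$. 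The feature to exploit is that all of these processes ignore their inputs, even when tested adaptively across the $m$ uses.

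Second, I would establish that for replacer combs the strategy half-distance equals the trace distance of the output states, including under adaptive testing. This is the step I expect to be the main obstacle, because the proof of \cref{thm:frontier-hiding-coherent} only asserts it for a single replacer. For the upper bound, the tester's input choices and intermediate operations cannot influence the replacer outputs. The final joint state is therefore obtained by applying a fixed tester-dependent channel to (output state)$\otimes$(tester memory), or, in sequential form, by interleaving the tester's operations with fresh appended outputs. I would write this as a single CPTP map $\Theta_T$ acting on the full $m$-slot output state. Contractivity then gives an advantage of at most $\frac12\|\rho-\sigma\|_1$. For the lower bound, a tester that feeds arbitrary inputs and measures the Helstrom projector on the collected outputs attains this value. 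The strategy norm's optimization over adaptive testers therefore adds nothing.

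Third, with this identification the minimization over correlated classical laws is exactly the minimization in \cref{thm:final-hiding-nhistory} for the alphabet of size $N$. Its permutation-averaging argument applies verbatim, since $|u_N\rangle$ is invariant under the full symmetric group on $[n]^m$, and not merely under the product permutations. The minimum is therefore $1-1/N=1-n^{-m}$, attained by the uniform law on product histories; that law is itself a product of uniform laws, which I would note in passing. Finally, Helstrom's formula gives the equal-prior optimal error as $p_{\rm err}=\frac12\bigl(1-\frac12\|\cdot\|_{\rm strat}\bigr)=\frac12\cdot n^{-m}$. Because the strategy norm already includes adaptive testers, this value holds for fully adaptive discrimination.
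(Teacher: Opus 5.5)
Your proposal is correct and follows the paper's proof. You identify $|u_n\rangle^{\otimes m}$ with the uniform superposition over the $n^m$ product records, treat every correlated classical competitor as a diagonal replacer, apply the uniform coherent-history distance theorem with alphabet size $n^m$, and finish with the Helstrom identity; your extra argument that an adaptive tester reduces to one channel on the joint output state (because the replacer's internal maps and the tester's operations act on disjoint systems and therefore commute) spells out a step the paper states in a single line.
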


\begin{proof}
The repeated replacer outputs the uniform coherent state on the
\(n^m\)-element product-history basis.  A general correlated classical
competitor is exactly an arbitrary state diagonal in that basis.
Theorem~\ref{thm:final-hiding-nhistory}, applied with alphabet size
\(n^m\), gives the distance \(1-n^{-m}\).  Since the processes ignore
every input, their strategy norm is the trace norm of the output-state
difference; adaptive choices cannot alter it.  The Helstrom identity
\(p_{\rm err}=(1-D)/2\) gives the second formula.
\end{proof}

Thus this coherent replacer family has a closed strong converse with
exponent \(\log n\), even against correlated classical histories and
fully adaptive discrimination.  This does not assert the same exponent
for general memory-bearing coherent combs.

\section{Conclusion}

Correlated finite-frame histories admit an exact operational coordinate
whenever the carrier resolves the group orbit.  Strategy distance is
total variation, and output-side causal degradation is classical
convolution.  Code cosets then produce prescribed temporal-depth
separations, while an entropy converse remains valid with leakage and
decoding error.  Even for nonlinear, nonuniform sectors, a family of
$M$ labels hidden on $t$ times and separated by distance $d_Z$ obeys
$\log_qM+t+d_Z-1\le n$.  Robust linear readout is governed by nested,
not unrestricted, cosets; nested MDS pairs attain the resulting
$R+t+2e+f\le n$ boundary.  These results turn the qualitative
incompleteness of reduced temporal channels into finite-error and
adversarial coding theorems for quantum reference processes.

\appendix

\section{Comb normalization and the trajectory upper bound}
\label{app:comb}

For completeness, the notation in \cref{eq:history-process} can be embedded in the standard comb formalism as follows.  Let $\mathcal G_{\bm g}$ denote the deterministic $n$-slot comb that inserts $\operatorname{Ad}_{U_{g_j}}$ in slot $j$.  The process is the convex combination
\begin{equation}
 \mathfrak M_\mu^{(n)}=\sum_{\bm g}\mu(\bm g)\mathcal G_{\bm g}.
\end{equation}
A normalized strategy tester $T$ maps every deterministic comb to a density operator before the terminal measurement.  Consequently
\begin{equation}
 T[\mathfrak M_\mu^{(n)}-\mathfrak M_\nu^{(n)}]
 =\sum_{\bm g}[\mu(\bm g)-\nu(\bm g)]\rho_{\bm g},
\end{equation}
where $\rho_{\bm g}=T[\mathcal G_{\bm g}]$ has unit trace.  This justifies the triangle-inequality upper bound in \cref{thm:isometry} without choosing a particular Choi normalization.  The lower tester is a legitimate parallel comb tester: it prepares the product orbit seed, connects one carrier factor to each slot, retains the reference factors, and performs the orthogonal history measurement only after the last slot.  No temporal feed-forward is used.

\section{Adaptive coordinate queries}

Suppose two history laws have identical marginals on every coordinate set of size at most $t$.  Consider a classical decision tree that queries at most $t$ coordinates, choosing each new coordinate from the previous transcript.  Along a fixed leaf, the event defining that transcript specifies values on a set of at most $t$ coordinates, so it has the same probability under both laws.  Every transcript distribution is therefore identical.  In the orbit-resolving process, measuring branch labels before adaptive control reduces any branch-query protocol to this argument.  More general coherent testers supported on a fixed set of at most $t$ slots are already covered by equality of the reduced combs.  This statement does not cover a coherent superposition of different temporal interfaces whose union has support larger than $t$.

\section{Why all cosets cannot correct one adversarial error}
\label{app:coset-obstruction}

Let $C<\F_q^n$ be proper and use every coset in $\F_q^n/C$ as a sector.  There is a standard basis vector $e_i\notin C$; otherwise $C$ would contain a basis and equal the whole space.  For every history $x\in a+C$, the one-symbol error $x\mapsto x+\alpha e_i$ with $\alpha\ne0$ produces a valid history in the distinct sector $a+\alpha e_i+C$.  An observation equal to that corrupted history is therefore compatible both with a noiseless preparation of the second sector and with a one-error preparation of the first.  No decoder can recover the sector against all one-symbol errors.  This elementary obstruction is the reason that the distance $d(C)$ cannot be used as the sector distance when all cosets are labels.

\section{Nested generalized Reed--Solomon extremizers}
\label{app:nested-grs}

Choose distinct evaluation points $\alpha_1,\ldots,\alpha_n\in\F_q$ and nonzero multipliers $v_i$.  For $1\le j\le n$, write
\begin{equation}
 \operatorname{GRS}_j(\bm\alpha,\bm v)
 =\{(v_1f(\alpha_1),\ldots,v_nf(\alpha_n)):\deg f<j\}.
\end{equation}
For $k<K$, set $C=\operatorname{GRS}_k$ and $D=\operatorname{GRS}_K$.  The nesting is immediate.  Every nonzero word of $D$ has weight at least $n-K+1$, and a polynomial of degree $K-1$ with $K-1$ prescribed roots gives a word of that weight.  Such a word cannot lie in $C$, whose distance is $n-k+1$, so
\begin{equation}
 \drel(D,C)=n-K+1.
\end{equation}
The duals are generalized Reed--Solomon codes of dimensions $n-k$ and $n-K$.  The same argument gives
\begin{equation}
 \drel(C^\perp,D^\perp)=k+1.
\end{equation}
This proves \cref{eq:mds-nested-parameters} directly.  A decoder may first decode the received word to the unique word of $D$ within radius $\lfloor(n-K)/2\rfloor$ and then output its quotient class modulo $C$.

\section{Extension beyond linear and uniform sectors}
\label{app:nonlinear}

The entropy converse \cref{thm:approx-capacity} does not require linearity, uniformity within a sector, or pairwise disjoint supports.  These hypotheses enter only the explicit code constructions.  Conversely, the exact strategy isometry means that any classical family of trajectory laws---including nonlinear orthogonal arrays, almost-universal hash families, or hidden Markov laws---immediately defines a process family with the same total-variation geometry.  What does not transfer automatically is causal degradation under transformations that modify input ports: \cref{thm:causal} remains restricted to output-side converters.

\section{Reproducibility}

The ancillary verification script first checks the orbit-resolution character identity and the non-Abelian convolution kernel on the group $S_3$, and then enumerates binary and ternary parity
sectors, verifies all proper marginals
exactly, and constructs ordinary and nested Reed--Solomon examples over
prime fields.  For the $[6,2]_7\subset[6,4]_7$ pair it enumerates all
$49$ quotient sectors, their projections, relative distance, and
radius-one error balls.  It checks the exact and finite-error capacity
inequalities on $250$ random joint laws, verifies the nonlinear and
linear payload--depth--protection bounds, and regenerates all three
figures.  Discrete counts are performed with integers before
conversion to floating point.

\section*{Data and code availability}
All scripts and machine-readable verification records needed to reproduce
the finite examples and figures are supplied in the ancillary directory of
the arXiv source package.  No proprietary or access-restricted data were used.

\section*{Scope, limitations, and open problems}

The paper identifies an exact total-variation strategy metric when the
carrier resolves the finite-group orbit, constructs code-based temporal
hiding sectors, and proves exact and approximate entropy converses.  The
verifier audits the non-Abelian \(S_3\) kernel, parity sectors, nested
Reed--Solomon example, capacity inequalities, and all figures.  Output-side
convolution is the proved causal degradation; input-port transformations are
outside that theorem.

The subsequent theorem now obtains the continuous compact-group limit for
smooth classical history laws with a finite energy cutoff and an explicit
Sobolev error.  The remaining target is therefore cleanly quantum: construct
coherent process sectors that are not trajectory mixtures, derive the
appropriate comb or strategy norm after imperfect orbit resolution, and
prove a strong converse for adaptive decoders.

\section*{coherent-perturbation bound}

Let \(\Upsilon_0,\Upsilon_1\) be two coherent process sectors and let
\(\Gamma_0,\Gamma_1\) be classical-trajectory mixtures satisfying
\(\|\Upsilon_j-\Gamma_j\|_{\mathrm{strat}}\le\varepsilon_j\) in the
strategy norm.  The triangle inequality gives
\[
 \left|
 \|\Upsilon_0-\Upsilon_1\|_{\mathrm{strat}}
 -\|\Gamma_0-\Gamma_1\|_{\mathrm{strat}}
 \right|
 \le\varepsilon_0+\varepsilon_1.
\]
Because the strategy norm already optimizes over admissible adaptive
testers, the estimate includes adaptive decoders without an additional
argument.

Hence every classical hiding construction with margin \(m\) remains secure
against coherent perturbations of total size below \(m\), and every
classical distinguishability witness remains effective up to the same
additive loss.  Imperfect orbit resolution can be incorporated by adding
its channel distance through data processing.

This provides a rigorous bridge from the exact trajectory theory to a
neighbourhood of genuinely coherent processes.  A strict quantum
separation still requires a sector outside every such neighbourhood,
together with a strong converse not inherited from the classical mixture.

\section*{compact-group history resolution}

The compact-group band estimate transfers to an entire finite history by
working on the product group rather than treating the time slots
independently.

\begin{theorem}[Smooth-history anti-hiding bound]
\label{thm:fourth-smooth-history}
Let \(K\) be a compact Lie group, let the history space be \(K^T\), and
let \(\mathfrak P_\mu\) denote the classical trajectory-mixture comb with
history density \(\mu\).  Let \(\Pi_E^{(T)}\) be the spectral projector of
the product Laplacian on \(K^T\).  Assume the retained history
representations are resolved by the carrier and define the positive
band constant
\[
 c_{E,T}=
 \inf_{\substack{\sigma\in\operatorname{ran}\Pi_E^{(T)}\\
                  \|\sigma\|_1=1}}
 \|\mathfrak P_\sigma\|_{\rm strat}.
\]
For history laws \(p,q\) with
\(\sigma=p-q\in H^s(K^T)\), set
\[
 D_{\rm strat}(p,q)
 =\frac12\|\mathfrak P_p-\mathfrak P_q\|_{\rm strat}.
\]
Then
\[
 D_{\rm strat}(p,q)
 \ge
 c_{E,T}\TV(p,q)
 -\frac{1+c_{E,T}}2(1+E)^{-s/2}
 \|\sigma\|_{H^s}.
\]
\end{theorem}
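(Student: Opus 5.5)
Split the signed history measure into a low band and a high band and apply the band constant to the low band only. Write
\[
 \sigma=\sigma_{\le}+\sigma_{>},\qquad \sigma_{\le}=\Pi_E^{(T)}\sigma,\quad \sigma_{>}=(1-\Pi_E^{(T)})\sigma .
\]
The map $\sigma\mapsto\mathfrak P_\sigma$ is linear, because the comb is a linear integral of the deterministic history combs $\mathcal G_{\bm g}$, exactly as in the appendix on comb normalization. The reverse triangle inequality for the strategy norm therefore gives
\[
 \|\mathfrak P_\sigma\|_{\rm strat}\ge \|\mathfrak P_{\sigma_\le}\|_{\rm strat}-\|\mathfrak P_{\sigma_>}\|_{\rm strat}.
\]

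For the high band, the upper-bound half of the proof of \cref{thm:isometry} carries over verbatim to signed densities. A normalized tester maps each deterministic history comb to a unit-trace state, so $\|\mathfrak P_\tau\|_{\rm strat}\le\|\tau\|_1$ for every signed density $\tau$. For the low band, homogeneity and the definition of $c_{E,T}$ give $\|\mathfrak P_{\sigma_\le}\|_{\rm strat}\ge c_{E,T}\|\sigma_\le\|_1$. The triangle inequality then gives $\|\sigma_\le\|_1\ge\|\sigma\|_1-\|\sigma_>\|_1$. Combining the three estimates,
\[
 \|\mathfrak P_\sigma\|_{\rm strat}\ge c_{E,T}\|\sigma\|_1-(1+c_{E,T})\|\sigma_>\|_1 .
\]
Halving this and using $\tfrac12\|\sigma\|_1=\TV(p,q)$ yields exactly the stated structure with coefficient $\tfrac{1+c_{E,T}}2$, provided that $\|\sigma_>\|_1\le(1+E)^{-s/2}\|\sigma\|_{H^s}$.

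That high-band tail estimate is the remaining step. Normalize the Haar measure on $K^T$ to be a probability measure, so that Cauchy--Schwarz gives $\|\sigma_>\|_{L^1}\le\|\sigma_>\|_{L^2}$. Take the Sobolev norm spectrally, $\|\sigma\|_{H^s}^2=\sum_\lambda(1+\lambda)^s\|\Pi_\lambda\sigma\|_2^2$ over eigenvalues $\lambda$ of the product Laplacian. On the range of $1-\Pi_E^{(T)}$ every eigenvalue exceeds $E$, so $(1+\lambda)^s\ge(1+E)^s$ there. This gives $\|\sigma_>\|_2\le(1+E)^{-s/2}\|\sigma\|_{H^s}$ directly.

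The main obstacle is conventions rather than analysis. The argument needs a probability-normalized Haar measure, and $H^s$ must be defined with weight $(1+\lambda)^{s}$ rather than $(1+\lambda)^{s/2}$, which would produce the exponent $-s/4$. The trace-class/$L^1$ identification of densities must also be consistent, so that $\TV$ is half the $L^1$ norm. I would fix these conventions at the start. I would also note that the hypothesis that the carrier resolves the retained representations is what makes $c_{E,T}>0$, so the bound is nontrivial; the inequality itself holds for any value of $c_{E,T}$. Finally, $\sigma$ has zero mean, but this plays no role beyond keeping $\sigma_\le$ inside the band.
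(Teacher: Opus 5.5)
Your proposal is correct and follows essentially the same route as the paper. You split $\sigma$ into the retained band $\Pi_E^{(T)}\sigma$ and the tail, bound the retained part by $c_{E,T}$, and bound the tail's strategy norm by its $L^1$ norm because a trajectory mixture is a linear combination of normalized combs; you then control the tail by $(1+E)^{-s/2}\|\sigma\|_{H^s}$ on probability-normalized Haar measure, and you spell out the Cauchy--Schwarz step and the spectral $H^s$ convention that the paper's one-line tail estimate leaves implicit.
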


\begin{proof}
On normalized Haar measure,
\[
 \|(I-\Pi_E^{(T)})\sigma\|_1
 \le(1+E)^{-s/2}\|\sigma\|_{H^s}.
\]
The definition of \(c_{E,T}\) controls the retained part, while the
strategy norm of the tail is at most its \(L^1\)-norm because a
trajectory mixture is a convex linear combination of normalized combs.
The same retained-plus-tail triangle argument as for the one-time
diamond norm gives the result.
\end{proof}

Thus a smooth classical payload of total-variation size \(\tau\) cannot
be hidden below
\[
 c_{E,T}\tau
 -\frac{1+c_{E,T}}2(1+E)^{-s/2}
 \|p-q\|_{H^s}.
\]
This completes the finite-energy compact-group limit for classical
trajectory mixtures.  It also isolates the genuinely quantum target:
a coherent-history separation must violate the mixture representation
itself, rather than exploit an uncharged high-frequency tail or an
unresolved orbit frame.

\makeatletter
\let\auto@bib@innerbib\relax
\let\auto@bib\relax
\let\write@bibliographystyle\relax
\makeatother

\end{document}